\documentclass{article}[11pt]
\usepackage{graphicx} 
\usepackage{amsmath}
\usepackage[a4paper, total={6.7in, 10in}]{geometry}
\usepackage{amssymb}
\usepackage{amsthm}
\usepackage{mathabx}
\usepackage{hyperref}
\usepackage{booktabs}

\newcommand{\independent}{\perp\!\!\!\!\!\!\perp} 
\usepackage[numbers,square]{natbib}
\usepackage{tikz}
\usetikzlibrary{positioning, arrows.meta}
\usepackage{amsmath, amsthm, amssymb, amstext, amsfonts}
\usepackage{tabularx} 
\usepackage{hyperref}
	\hypersetup{
   	colorlinks=true,
   	linkcolor=red,
   	filecolor=magenta,      
	urlcolor=green,
 	citecolor=blue,
	}
\usepackage{tikz}
\usetikzlibrary{patterns, arrows.meta}
\usepackage[toc,page]{appendix}

\def\ba{\begin{align}}
\def\ea{\end{align}}
\def\ban{\begin{align*}}
\def\ean{\end{align*}}

\def\be{\begin{eqnarray}}
\def\ee{\end{eqnarray}}
\def\een{\end{eqnarray*}}

\def\bqq{\begin{equation}}
\def\eqq{\end{equation}}
\def\bqqn{\begin{equation*}}
\def\eqqn{\end{equation*}}

\def\sq{$\Box$}

\def\qed{\ifmmode\sq\else{\unskip\nobreak\hfil
\penalty50\hskip1em\null\nobreak\hfil\sq
\parfillskip=0pt\finalhyphendemerits=0\endgraf}\fi\par\medbreak}

\newsavebox{\junk}
\savebox{\junk}[1.6mm]{\hbox{$|\!|\!|$}}

\def\til={{\widetilde =}}

 \def\eq#1/{(\ref{#1})}

\newtheorem{theorem}{Theorem}[section]
\newtheorem{corollary}[theorem]{Corollary}

\newtheorem{lemma}[theorem]{Lemma}
\newtheorem{definition}[theorem]{Definition}

\newtheorem{remark}{Remark}[section]

\def\eq#1/{(\ref{e:#1})}

\def\bdes{\begin{description}}
\def\edes{\end{description}}

\def\notes#1{}

\begin{document}
\title{Fine asymptotics for approximate  simulation\\ of symmetric channels}

\author{Valentinian Lungu\\
University of Cambridge\\
Cambridge, UK\\
\texttt{vml26@cam.ac.uk}%
\and
Aaron B. Wagner\\
Cornell University\\
Ithaca, NY, USA\\
\texttt{wagner@cornell.edu}%
\and
Ioannis Kontoyiannis\\
University of Cambridge \\
Cambridge, UK\\
\texttt{ik355@cam.ac.uk}
}

\maketitle

\begin{abstract}
We study the approximate channel simulation problem in the normal approximation regime. 
Our starting point is the fact established by Cao et al.~\cite{Cao_channel_simulation}, that the minimal codebook size $M_\epsilon^*(W)$ for a fixed error tolerance $\epsilon>0$ is closely related to the $\epsilon$-smooth channel max-information, $I_{\max}^\epsilon(W)$. Using this, we refine
earlier large-blocklength approximations to $M_\epsilon^*(W^{\otimes n})$ for discrete memoryless channels. Specifically, we provide tight third-order expansions for Gallager-symmetric channels and demonstrate that the third-order term is channel-dependent. As in corresponding channel coding expansions, we identify a substantial rate gap between singular and nonsingular channels. Notably, the direction of this gap is reversed relative to channel coding: The third-order term is $-\frac{1}{2}\log n$ for singular channels but $O(\log\log n)$ for nonsingular channels.  \end{abstract}

\section{Introduction}
We consider the problem of approximate \textit{channel simulation}. A sender (Alice) and a receiver (Bob) wish to approximately simulate a target conditional distribution  $W_{Y\mid X}(\cdot\mid x)$, based on a channel $W$ known to both, but for a given input $x$ available only to Alice. Upon observing $x$, Alice losslessly transmits a codeword $f(x, U)$ to Bob, where $U$ denotes their common randomness. Bob then computes $Y = g(f(x, U), U)$, and the key requirement is that the induced conditional distribution $\tilde{W}_{Y\mid X}(\cdot \mid x)$ be $\epsilon$-close to the target distribution $W_{Y\mid X}(\cdot \mid x)$; precise definitions are given in subsequent sections. The fundamental communication limit of this problem is the smallest achievable length of the description $f(x,U)$.

The first formalisation of channel simulation was introduced in~\cite{Bennett_entanglement}, 
with earlier related work appearing in~\cite{Wyner_wiretap, Steinberg_simulation, STEINER2000239}; see~\cite{CheukTing} for an overview of this and other relevant history. More recently,
the evaluation of the fundamental limit of channel simulation was studied in~\cite{Cao_channel_simulation} in the normal approximation regime, where
the smooth channel max-information, $I_{\max}^{\epsilon}(W)$,
was identified as the central quantity governing the behaviour of $M^*_\epsilon(W)$. 
Specifically, let $M_\epsilon^*(W)$ denote the minimal channel simulation codebook size with maximum allowable error $\epsilon$ for a target channel $W$;
the following one-shot bounds were proved in~\cite[Theorem 2 and Theorem 4]{Cao_channel_simulation}: for any
$\delta\in (0, \epsilon]$,
\begin{equation}
\label{eq:Cao_one_shot}
    I_{\max}^\epsilon(W) \leq \log M^*_\epsilon(W)\leq I_{\max}^{\epsilon-\delta}(W) +\log\log \frac{1}{\delta}.
\end{equation}
A link between $I_{\max}^\epsilon(W)$ and channel coding was also established \cite[Theorem 5]{Cao_channel_simulation}; writing $N^*_\epsilon(W)$ for the optimal number of distinct messages that can be transmitted over $W$ with an average probability of error bounded by $\epsilon$, we have, 
for any $0<\delta<\min\{\epsilon,1-\epsilon\}$:
\begin{align}
\log N_{1-\epsilon-\delta}^*(&W) +\log \delta \leq \log M^*_\epsilon(W)
    \leq \log N_{1-\epsilon+\delta}^*(W)+ \log \frac{2}{\delta} + \log\log \frac{4}{\delta^2}.
    \label{eq:Cao_coding}
\end{align}
Using the corresponding channel-coding expansions~\cite{strassen:64b, PPV:10, altug_wagner_symmetric_channels, moulin2016logvolumeoptimalcodesmemoryless, tan:book, Tomamichel_2013}
for $N^*_\epsilon(W^{\otimes n})$  with respect to the memoryless channel
$W^{\otimes n}$, yields,
\begin{equation}
    \log M^*_\epsilon(W^{\otimes n}) = nC(W)+\sqrt{nV_{1-\epsilon}(W)}\Phi^{-1}(1-\epsilon) + O(\log n),
    \label{eq:cao_final_result}
\end{equation}
where $C(W)$ is the capacity of $W$ and $V_{1-\epsilon}(W)$ is the dispersion of the channel. 

As noted in~\cite{Cao_channel_simulation}, while the one-shot bounds in~\eqref{eq:Cao_one_shot} are fairly sharp, the expansion in~\eqref{eq:cao_final_result} only identifies the first- and second-order terms for large $n$. For the channel coding problem, the third-order term has been investigated extensively~\cite{Wagner_RCU, altug_wagner_asymmetric_channels, altug_wagner_symmetric_channels, PPV:10}. Unlike in the cases of
binary hypothesis testing~\cite{Lungu_hyp_test,K-lungu-arxiv:24} and data compression~\cite{varentropy}, the third-order term in channel coding is channel-dependent.
Specifically, suppose $W$ is a {\it Gallager-symmetric} channel. If $W$ is nonsingular (e.g., the BSC) the third-order term is $+\frac{1}{2}\log n$, while if it is singular (e.g., the BEC) it is $O(1)$. 
A similar dichotomy, featuring a $\frac{1}{2}\log n$ rate gap between singular and nonsingular channels, was recently reported for the exact channel synthesis problem in~\cite{Sriramu_redundancy}.

In this work, we consider both singular and nonsingular Gallager-symmetric channels and provide asymptotic expansions for the quantity $I_{\max}^\epsilon(W^{\otimes n})$, resulting in tight third-order approximations for $M_\epsilon^*(W^{\otimes n})$. We again observe a rate gap between the two channel types: the third-order term equals $-\frac{1}{2}\log n$ for singular channels and it is $O(\log \log n)$ for nonsingular channels. Therefore, our results demonstrate that singular channels achieve superior third-order rates, being consistent with the results provided in \cite{Sriramu_redundancy} for the {\em exact} channel synthesis problem and in contrast to  channel coding where nonsingular channels lead to higher (and hence preferable) rates; see Table \ref{tab:third_order_results}.

\begin{table}[h!]
    \centering
    \renewcommand{\arraystretch}{1.5}
    \begin{tabular}{@{}lll@{}}
        \toprule
        \textbf{Problem} & \textbf{Singular channels} & \textbf{Nonsingular channels} \\
        \midrule
        \textbf{$N_\epsilon^*(W^{\otimes n})$} & $nC(W) + \sqrt{n V_\epsilon(W)}\Phi^{-1}(\epsilon) + \boldsymbol{O(1)}$ & $nC(W) + \sqrt{n V_\epsilon(W)}\Phi^{-1}(\epsilon) + \boldsymbol{\frac{1}{2}\log n}+ O(1)$ \\
   $M_\epsilon^*(W^{\otimes n})$ & $ nC(W) -\sqrt{n V_{1-\epsilon}(W)}\Phi^{-1}(\epsilon) \mathbf{\boldsymbol{-\frac{1}{2}\log n}} + O(\log\log n) $ & $nC(W) -\sqrt{n V_{1-\epsilon}(W)}\Phi^{-1}(\epsilon)  + \mathbf{\boldsymbol{O(\log\log n)}}$  \\
        \bottomrule
    \end{tabular}
    \caption{Third-order optimal rates for channel coding and approximate channel simulation in singular vs. nonsingular Gallager-symmetric channels. }
    \label{tab:third_order_results}
\end{table}

\section{Notations, definitions and problem setup}
\subsection{Notations and definitions}
We consider two finite alphabets, $\mathcal{X}$ and $\mathcal{Y}$, representing the input and output spaces, respectively. We denote by $\mathcal{P}(\mathcal{X})$ and $\mathcal{P}(\mathcal{Y})$ the sets of probability mass functions (p.m.f.s) on $\mathcal{X}$ and $\mathcal{Y}$, respectively. Similarly, $\mathcal{P}(\mathcal{Y}\mid \mathcal{X})$ denotes the set of all families $W=(W_{Y|X}(\cdot\mid\cdot))$ of conditional p.m.f.s of $Y$ given $X$. For brevity, we denote the conditional p.m.f.\ $W_{Y|X}(\cdot\mid x)$ of $Y$ given $X = x$ by $W_x(\cdot)$.

We denote $\mathcal{X}(y) := \{x \in \mathcal{X} : W(y\mid x) > 0\}$ the set of input
    symbols that may produce the output $y$ with nonzero probability. Similarly, we let $\mathcal{Y}(x) : = \{y\in \mathcal{Y}: W(y \mid x) >0\}$
    denote support set of $W_x$.
    
    Without loss of generality, we assume throughout the paper that every output symbol $y \in \mathcal{Y}$ can be observed with strictly positive probability for at least one input $x\in \mathcal{X}$; i.e the transition matrix of $W$ does not contain any all-zero columns.

All random variables are denoted by capital letters (e.g., $X$), and their realizations by lowercase letters (e.g., $x$). Subscripts index samples, while superscripts denote sequence lengths: $x^n = [x_1, x_2, \ldots, x_n]$.
For any $M\in \mathbb{N}$, we denote $[M]:=\{1, 2\ldots, M\}$.

We also use a subscript on expectation and variance operators to specify the underlying probability distribution; for instance, for a random variable $Y$ taking values in $\mathcal{Y}$, a given $x \in \mathcal{X}$, and a function $f\colon \mathcal{Y} \to \mathbb{R}$, we write $\mathbb{E}_{W_x}[f(Y)] = \sum_{y\in \mathcal{Y}} f(y) W_x(y)$.  

The total variation (TV) distance between two p.m.f.s $p$ and $q$ on a finite alphabet $\mathcal{X}$ is
\begin{equation*}
    \|p - q\|_{\mathrm{TV}} = \sum_{x \in \mathcal{X} : p(x) \geq q(x)} (p(x) - q(x)).
\end{equation*}
Given a channel $W \in \mathcal{P}(\mathcal{Y}\mid \mathcal{X})$, the capacity of $W$ is
\begin{equation*}
    C(W) = \sup_{p \in \mathcal{P}(\mathcal{X})} I(p; W),
\end{equation*}
where $I(p; W)$ is the mutual information
\begin{equation*}
    I(p; W)  = \sum_{x \in \mathcal{X}} \sum_{y \in \mathcal{Y}} p(x) W(y \mid x) \log \frac{p(x)W(y \mid x)}{p(x)Q(y)},
\end{equation*}
with $p \in \mathcal{P}(\mathcal{X})$ and the induced output distribution ${Q(y) := \sum_{z \in \mathcal{X}} p(z) W(y \mid z)}$ for $y \in \mathcal{Y}$.
Furthermore, we define the $\epsilon$-channel dispersion as
\begin{equation}
    V_\epsilon(W) := \begin{cases}
        \min_{p \in \Pi} U(p, W), & \text{if } \epsilon < 1/2, \\
        \max_{p \in \Pi} U(p, W), & \text{if } \epsilon \geq 1/2,
    \end{cases}
\end{equation}
where $\Pi := \{p \in \mathcal{P}(\mathcal{X}) : I(p; W) = C(W)\}$ is the set of capacity-achieving input distributions and
\begin{equation*}
    U(p, W) := \sum_{x, y}  p(x) W(y \mid x) \left( \log \frac{W(y \mid x)}{Q(y)} - I(p; W) \right)^2.
\end{equation*}

A central quantity in our analysis is the max-divergence,
\begin{equation}
    D_{\max}(p \,\|\, q) := \inf \left\{ \lambda \in \mathbb{R} : p(x) \leq e^\lambda q(x),\text{ } \forall x \in \mathcal{X} \right\},
\end{equation}
associated with the channel max-information $I_{\max}(W)$, defined as
\begin{equation}
    I_{\max}(W) := \sup_{p \in \mathcal{P}(\mathcal{X})} \inf_{q \in \mathcal{P}(\mathcal{Y})} D_{\max}(p \cdot W \,\|\, p \times q).
    \label{eq:ImaxW}
\end{equation}
Finally, letting $\Delta$ denote the
worst-case TV distance on $\mathcal{P}(\mathcal{Y}\mid \mathcal{X})$,
\begin{equation}
\label{def:delta_distance}
    \Delta(\tilde{W}, W) := \sup_{x \in \mathcal{X}} \|\tilde{W}_x - W_x\|_{\mathrm{TV}},
\end{equation}
the $\epsilon$-smooth channel max-information with respect to $\Delta$ is given by
\begin{equation}
    I_{\max}^\epsilon(W) := \inf_{\substack{\tilde{W} \in \mathcal{P}(\mathcal{Y}\mid \mathcal{X}) : \Delta(\tilde{W}, W) \leq \epsilon}} I_{\max}(\tilde{W}).
\end{equation}

In this work, we consider both singular and nonsingular Gallager-symmetric, or simply {\em symmetric},  channels.

\begin{definition}
A discrete channel is {\em symmetric} if the channel output alphabet can be partitioned into
subsets such that, within each subset, the matrix of transition probabilities satisfies the following: each row (resp.
column) is a permutation of each other row (resp. column).
\end{definition} 

\begin{definition}
A symmetric channel $W \in \mathcal{P}(\mathcal{Y}\mid \mathcal{X})$ is  {\em singular} if $W (y\mid x) =W (y\mid z) = : w(y)$ for all $(x, y, z) \in \mathcal{X}\times \mathcal{Y}\times \mathcal{X} $ such that $W(y\mid x)W(y\mid z) >0$. Otherwise, it is {\em nonsingular}.
\end{definition}

\begin{definition}
For a symmetric channel $W \in \mathcal{P}(\mathcal{Y}\mid \mathcal{X})$, the
{\em information density} is defined as
\begin{equation}
    i(x, y) := \log \frac{W(y\mid x)}{Q(y)}, \quad \text{for all } (x, y) \in \mathcal{X}\times \mathcal{Y},
    \label{def:information_density}
\end{equation}
where $Q$ is the {\em unique} capacity-achieving output distribution, given by $Q(y) = \frac{1}{|\mathcal{X}|}\sum_{z\in \mathcal{X}}W(y\mid z)$.
\end{definition}

\begin{remark}
    For any symmetric {\em singular} channel ${W \in \mathcal{P}(\mathcal{Y}\mid \mathcal{X})}$, the information density depends only on the output $Y$; that is,
    \begin{equation*}
        i(x, y) = \mathbb{E}[i(X,Y)\mid Y=y],
    \end{equation*}
    for all $(x,y) \in \mathcal{X}\times \mathcal{Y}$.
    Moreover, it takes the explicit form
    \begin{equation}
        i(x,y)=\begin{cases}
            \log\frac{|\mathcal{X}|}{|\mathcal{X}(y)|}, & \text{if } W(y\mid x)>0, \\
            -\infty, & \text{otherwise}.
         \end{cases}
        \label{eq:singular_info_density}
    \end{equation}
\end{remark}

\begin{remark} 
    For any symmetric channel $W \in \mathcal{P}(\mathcal{Y}\mid \mathcal{X})$, the distribution of the random variable $i(x, Y)$ under the conditional probability measure $W_x$ is identical for all inputs $x \in \mathcal{X}$.
\end{remark}

\begin{remark}
    For symmetric channels, both the minimum and maximum of $U(p,W)$ over $p\in\Pi$ are equal. In particular, $V_\epsilon(W)$ is independent of $\epsilon\in(0,1)$ --- see Lemma \ref{lemma:capacity_dispersion}. For simplicity, we denote this common value by $V(W)$.

    Moreover, for {\em nonsingular} channels, the dispersion $V(W)$ is strictly positive: $V(W)>0$.
\end{remark}

A simple upper bound for the channel max-information is given by the result of the following lemma.

\begin{lemma}
\label{lemma:ub_max_info}
    Let $W \in \mathcal{P}(\mathcal{Y}\mid \mathcal{X})$. The channel max-information is bounded  above as
    \begin{equation}
        I_{\max}(W) \leq \log \sum_{y \in \mathcal{Y}}\max_{x\in \mathcal{X}} W(y\mid x).
        \label{eq:ub_max_channel_info}
    \end{equation}
\end{lemma}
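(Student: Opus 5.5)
The plan is to bound the inner infimum in \eqref{eq:ImaxW} by a single choice of $q$ that does not depend on $p$. The supremum over $p$ will then be controlled uniformly. Set
\begin{equation*}
    S := \sum_{y\in\mathcal{Y}} \max_{x\in\mathcal{X}} W(y\mid x), \qquad q^*(y) := \frac{\max_{x\in\mathcal{X}} W(y\mid x)}{S}.
\end{equation*}
This $q^*$ is a valid p.m.f. on $\mathcal{Y}$, and $S>0$ because each $W_x$ sums to one. By the standing assumption that $W$ has no all-zero columns, $q^*(y)>0$ for every $y$.

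Fix any $p\in\mathcal{P}(\mathcal{X})$. We need a $\lambda$ with $p(x)W(y\mid x) \le e^\lambda p(x) q^*(y)$ for all $(x,y)$. If $p(x)=0$ both sides vanish, so the inequality is trivial. If $p(x)>0$ we can divide by $p(x)$, and it suffices to check $W(y\mid x) \le e^\lambda q^*(y)$. Taking $\lambda=\log S$ gives
\begin{equation*}
    W(y\mid x)\le \max_{z\in\mathcal{X}} W(y\mid z) = S\, q^*(y),
\end{equation*}
so $D_{\max}(p\cdot W\,\|\,p\times q^*)\le \log S$.

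It follows that $\inf_q D_{\max}(p\cdot W\,\|\,p\times q)\le \log S$ for every $p$. Taking the supremum over $p$ yields \eqref{eq:ub_max_channel_info}.

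No step here is a real obstacle. The only points needing care are the zero entries: rows with $p(x)=0$, which are trivial, and the positivity of $q^*$, which is secured by the no-zero-column assumption.
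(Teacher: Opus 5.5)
Your proof is correct and is essentially the paper's argument: both fix the single output distribution $q(y)\propto\max_{x}W(y\mid x)$, independent of $p$, check that $\lambda=\log\sum_{y}\max_{x}W(y\mid x)$ satisfies the max-divergence constraint, and then take the supremum over $p$. Your remark about positivity of $q^*$ is harmless but not needed, since any $y$ with $q^*(y)=0$ has $W(y\mid x)=0$ for all $x$ and so satisfies the constraint trivially.
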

\begin{proof}
    Define $q_m(y) := \frac{1}{c}\max_{x \in \mathcal{X}}W(y\mid x)$, where $c := \sum_{z\in \mathcal{Y}}\max_{x\in \mathcal{X}}W(z\mid x)$ is the normalisation constant. 
    Fix an arbitrary input distribution $p\in \mathcal{P}(\mathcal{X})$. By evaluating the inner infimum in the definition of the channel max-information in~\eqref{eq:ImaxW}
    at the specific distribution $q_m$, we obtain the upper bound
    \begin{align*}
        \inf_{q \in \mathcal{P}(\mathcal{Y})} D_{\max}(p\cdot W \,\|\, p\times q) &\leq D_{\max}(p\cdot W \,\|\, p\times q_m) \\
        &= \inf \left\{\lambda\in \mathbb{R}: p(x)W(y\mid x) \leq e^\lambda p(x)q_m(y), \forall (x,y) \right\}\\
        &= \inf \left\{\lambda\in \mathbb{R}: p(x)W(y\mid x) \leq \frac{e^\lambda}{c} p(x) \max_{x' \in \mathcal{X}} W(y\mid x'), \forall (x,y) \right\}\\
        &\leq \log c,
    \end{align*}
    where the final inequality holds because the condition is satisfied for all $(x,y)$ whenever $\frac{e^\lambda}{c} \geq 1$. Since this bound of $\log c$ depends only on the channel $W$ and holds simultaneously for all input distributions $p \in \mathcal{P}(\mathcal{X})$, taking the supremum over $p$ completes the proof.
\end{proof}

The following corollary is an immediate consequence of Lemma~\ref{lemma:ub_max_info}.

\begin{corollary}
\label{cor}
The $\epsilon$-smooth channel max-information is bounded  above as 
    \begin{equation}
 I^\epsilon_{\max}(W) \leq \log \inf_{\substack{\tilde{W}: \Delta(\tilde{W}, W) \leq \epsilon}}\sum_{y \in \mathcal{Y}}\max_{x\in \mathcal{X}} \tilde{W}(y\mid x).
        \label{eq:ub_epsilon_max_info}
    \end{equation}
\end{corollary}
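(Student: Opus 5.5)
The corollary follows by applying Lemma~\ref{lemma:ub_max_info} to every channel in the smoothing ball and then optimising over that ball. Fix $\epsilon$ and let $\tilde{W} \in \mathcal{P}(\mathcal{Y}\mid\mathcal{X})$ be any channel with $\Delta(\tilde{W}, W) \leq \epsilon$. Lemma~\ref{lemma:ub_max_info} holds for an arbitrary channel, so applying it to $\tilde{W}$ gives
\begin{equation*}
I_{\max}(\tilde{W}) \leq \log \sum_{y\in\mathcal{Y}} \max_{x\in\mathcal{X}} \tilde{W}(y\mid x).
\end{equation*}
By the definition of $I^\epsilon_{\max}(W)$ as an infimum over exactly this feasible set, $I^\epsilon_{\max}(W) \leq I_{\max}(\tilde{W})$. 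Combining the two inequalities shows that $I^\epsilon_{\max}(W)$ is bounded above by $\log \sum_{y} \max_{x} \tilde{W}(y\mid x)$ for every feasible $\tilde{W}$.

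Next, take the infimum of the right-hand side over all feasible $\tilde{W}$. Since $\log$ is continuous and nondecreasing on $(0,\infty)$, the infimum can be moved inside the logarithm. The set of feasible $\tilde{W}$ is nonempty, because it contains $W$ itself. Each sum $\sum_y \max_x \tilde{W}(y\mid x)$ is at least $\max_x\sum_y \tilde W(y\mid x) = 1$, so the infimum is at least $1$ and its logarithm is finite. This yields~\eqref{eq:ub_epsilon_max_info}.

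The only point that needs any care is interchanging $\inf$ and $\log$, and monotonicity and continuity of $\log$ on $[1,\infty)$ handle it. One could also note that the feasible set is compact and the objective continuous, so the infimum is actually attained, but this is not needed.
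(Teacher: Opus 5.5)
Your proposal is correct and is exactly the argument the paper has in mind: the paper gives no separate proof and calls the corollary an immediate consequence of Lemma~\ref{lemma:ub_max_info}. That is, apply the lemma to each $\tilde{W}$ with $\Delta(\tilde{W},W)\leq\epsilon$, use the definition of $I^\epsilon_{\max}(W)$ as an infimum, and move the infimum inside the monotone, continuous logarithm, as you do.
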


This motivates the definition of the following quantity, which plays a central role in the subsequent analysis.
\begin{definition}
\label{def:L_quantity}
    For a given channel $W \in \mathcal{P}(\mathcal{Y}\mid \mathcal{X})$, let
    \begin{equation}
        L^\epsilon(W) := \inf_{\substack{\tilde{W}\in \mathcal{P}(\mathcal{Y}\mid \mathcal{X}) : \Delta(\tilde{W}, W) \leq \epsilon}} \sum_{y \in \mathcal{Y}}\max_{x\in \mathcal{X}} \tilde{W}(y\mid x).
        \label{eq:Ldef}
    \end{equation}
\end{definition}
    

\subsection{Problem setup}
We begin by formally introducing the problem of {\em approximate} channel simulation. The goal of this problem is to construct a channel $\tilde{W} \in \mathcal{P}(\mathcal{Y}\mid \mathcal{X})$ that is $\epsilon$-close in the $\Delta$ distance defined in \eqref{def:delta_distance} to a target channel $W\in\mathcal{P}(\mathcal{Y}\mid \mathcal{X})$: 
$$\Delta(\tilde{W}, W) \leq \epsilon,$$
using a randomness-assisted encoder-decoder scheme. 

\begin{definition} Let $\epsilon \in (0,1)$ and consider a random variable $X$ taking values in a finite input set, $\mathcal{X}$, with p.m.f. $p_X \in \mathcal{P}(\mathcal{X})$. We refer to common randomness as a random variable $U \independent X$ taking values in a set $\mathcal{U}$ with a probability distribution $p_U\in \mathcal{P}(\mathcal{U})$.

An encoder, $f$, is a map $f: \mathcal{X} \times \mathcal{U} \mapsto [M],$
    where $M \in \mathbb{N}$ and $[M]$ representing a $M$-sized codebook. Similarly, a decoder, $g$, is a map $g:[M] \times \mathcal{U} \mapsto \mathcal{Y}.$

    We refer to a tuple $(f, g, U)$ as a size-$M$ $\epsilon$-simulation code for the ``target'' channel $W \in \mathcal{P}(\mathcal{Y}\mid \mathcal{X})$ if ${(X, Y) \sim p_X \tilde{W}_{Y\mid X}}$ where $Y  = g(f(X, U), U)$ and $\tilde{W} \in \mathcal{P}(\mathcal{Y}\mid \mathcal{X})$ such that $\Delta(\tilde{W}, W) \leq \epsilon$.
\end{definition}

\begin{definition}
    The minimum channel simulation codebook size with fixed TV distance tolerance ${\epsilon \in (0,1)}$ is 

        $$M_\epsilon^*(W) := \min\{M\in \mathbb{N}: \exists \text{ a size-$M$ $\epsilon$-simulation code for } W \in \mathcal{P}(\mathcal{Y}\mid \mathcal{X})  \}.$$
\end{definition}

As stated in the Introduction, the problem of approximate channel simulation is connected with the problem of channel coding. In fact, \cite{Cao_channel_simulation} show that the minimum channel simulation codebook size can be bounded from below and above as in \eqref{eq:Cao_coding}. For completeness, we define $N_\epsilon^*(W)$ below.
\begin{definition}
    Let $I$ be a uniform random variable taking values in a $N$-sized message set $[N]$ with $N \in \mathbb{N}$. A ($N$) code for channel $W\in \mathcal{P}(\mathcal{Y}\mid\mathcal{X})$, say $(f, \phi)$, consists of an encoder map $f:[N] \mapsto \mathcal{X}$ and a decoder map $\phi:\mathcal{Y} \to [N]$. 
    
    We let $\{\mathcal{A}_m\}_{m=1}^{[N]}$ denote the decoding regions and $\bar{P}_e(f, \phi, W)$ the average error probability of $(f, \phi)$ for channel $W$:

    $$\bar{P}_e(f, \phi):= \frac{1}{N}\sum_{m \in [N]}\sum_{y\in \mathcal{A}_m^c} W(y\mid f(m)).$$

The minimum average error probability
attainable by any $(N)$ code for $W$ is denoted by $\bar{P}_e(N, W)$. Then, for any $\epsilon \in (0,1)$, the largest number of distinct messages that can be transmitted through $W$ with tolerance $\epsilon$ is

$$N_\epsilon^*(W) := \max\{ N\in \mathbb{N}: \bar{P}_e(N, W) \leq \epsilon\}.$$
\end{definition}

Our goal for the next sections is to find approximations for $M^*_\epsilon(W^{\otimes n})$ using the lower bound from \eqref{eq:Cao_coding} and the upper bound from \eqref{eq:Ldef}.


\section{General upper bound: Tightness for nonsingular channels}

The idea behind the proposed channel for the optimisation problem in \eqref{eq:Ldef} is to cap the probabilities of the most likely outputs given a specific input. We set a threshold proportional to the capacity-achieving output distribution, $Q$. The proportionality constant is chosen such that the total ``shaved'' mass is approximately equal to $\epsilon$. Crucially, this choice also bounds above the information density defined in~\eqref{def:information_density}. Informally, we redistribute $\epsilon$ mass from the most informative sequences to the least informative ones; see Figure~\ref{fig:shave_mass_procedure}. 
 
\begin{figure}[!h]
    \centering
    \begin{tikzpicture}[x=1cm, y=1cm]

\draw[->] (0,0) -- (6.5,0) node[right] {$k$};
\draw[->] (0,0) -- (0,5.5) node[above] {p.m.f.};

\foreach \x/\label in {0.4/0, 1.2/1, 2.0/2, 2.8/3, 3.6/4, 4.4/5, 5.2/6} {
    \node[below] at (\x, 0) {\label};
}

\draw (0,0) rectangle (0.8, 4.5);
\draw (0.8,0) rectangle (1.6, 3.8);
\draw (1.6,0) rectangle (2.4, 3.0);
\draw (2.4,0) rectangle (3.2, 2.2);
\draw (3.2,0) rectangle (4.0, 1.6);
\draw (4.0,0) rectangle (4.8, 1.0);
\draw (4.8,0) rectangle (5.6, 0.4);

\path[pattern=north east lines] (0, 2.4) rectangle (0.8, 4.5);
\path[pattern=north east lines] (0.8, 2.4) rectangle (1.6, 3.8);
\path[pattern=north east lines] (1.6, 2.4) rectangle (2.4, 3.0);

\draw[dotted, thick] (0, 2.4) node[left] {} -- (6.0, 2.4);
\node[anchor=west] at (1.8, 3.5) {$\approx \epsilon$};
\draw[-{Stealth[length=4.5mm, width=3.5mm]}, thick, bend left=65] (1.2, 4.7) to node[above, yshift=3mm] {} (5.2, 1.5);

\node at (7.5, 2.8) {``shave'' mass};
\draw[double, -{Stealth}, thick] (6.8, 2.4) -- (8.2, 2.4);
\node at (7.5, 2.0) {normalise};

\begin{scope}[xshift=9.5cm]
\draw[->] (0,0) -- (6.5,0) node[right] {$k$};
\draw[->] (0,0) -- (0,5.5) node[above] {p.m.f.};

\foreach \x/\label in {0.4/0, 1.2/1, 2.0/2, 2.8/3, 3.6/4, 4.4/5, 5.2/6} {
    \node[below] at (\x, 0) {\label};
}

\draw[dotted, thick] (0, 2.4) -- (6.0, 2.4);

\draw (0,0) rectangle (0.8, 2.4);
\draw (0.8,0) rectangle (1.6, 2.4);
\draw (1.6,0) rectangle (2.4, 2.4);

\draw[line width=2pt] (0, 2.4) -- (2.4, 2.4); 

\draw (2.4,0) rectangle (3.2, 2.2);
\draw (3.2,0) rectangle (4.0, 1.8);
\draw (4.0,0) rectangle (4.8, 1.3);
\draw (4.8,0) rectangle (5.6, 0.8);
\end{scope}

\end{tikzpicture}
    \caption{Schematic for the ``mass-shaving'' procedure.}
    \label{fig:shave_mass_procedure}
\end{figure}

\begin{lemma}
\label{lemma:one_shot_ub_general}
    Let $W\in \mathcal{P}(\mathcal{Y}\mid \mathcal{X})$ be a symmetric channel and fix $\epsilon \in (0,1)$. The quantity $L^\epsilon(W)$ defined in Definition~\ref{def:L_quantity} is bounded above as,
    \begin{equation}
        \label{eq:one_shot_ub_general}
        L^\epsilon(W) \leq \inf_{\substack{\gamma \geq -\log r : \\ \mathbb{P}_{W_{x_0}}\left(i(x_0, Y) \geq \gamma\right) \leq \epsilon}} e^\gamma \quad \text{for any arbitrarily chosen } x_0 \in \mathcal{X},
    \end{equation}
    where $r = Q(\mathcal{Y}(x_0))$ and $Q(y) = \frac{1}{|\mathcal{X}|}\sum_{x \in \mathcal{X}}W(y\mid x)$ is the optimal capacity-achieving distribution.
\end{lemma}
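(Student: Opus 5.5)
Fix $\gamma \geq -\log r$ with $\mathbb{P}_{W_{x_0}}(i(x_0,Y)\geq \gamma)\leq \epsilon$. The plan is to build an explicit channel $\tilde W$ with $\Delta(\tilde W,W)\leq\epsilon$ and $\sum_y \max_x \tilde W(y\mid x)\leq e^\gamma$; taking the infimum over such $\gamma$ then gives the claim, since $L^\epsilon(W)$ is an infimum over all admissible $\tilde W$. The construction is the mass-shaving procedure of Figure~\ref{fig:shave_mass_procedure}. For each input $x$, cap $W_x$ at the threshold $e^\gamma Q(y)$, i.e.\ set $\hat W_x(y) := \min\{W(y\mid x), e^\gamma Q(y)\}$. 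The removed mass is $\sum_y (W(y\mid x)-e^\gamma Q(y))^+ \leq \mathbb{P}_{W_x}(i(x,Y)>\gamma)$. By the remark that the law of $i(x,Y)$ under $W_x$ does not depend on $x$, this mass is the same value $\eta\leq\epsilon$ for every $x$.

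Next, redistribute the removed mass $\eta$ onto outputs in $\mathcal{Y}(x)$ that remain below the cap, so as to obtain $\tilde W_x$ with $\tilde W_x(y)\leq e^\gamma Q(y)$ for all $y$. Because mass is only removed from the outputs where $W_x$ exceeds $\tilde W_x$, we get $\|\tilde W_x - W_x\|_{\mathrm{TV}} = \eta\leq \epsilon$ for each $x$. The bound on the objective is then immediate:
\begin{equation*}
\sum_y \max_x \tilde W(y\mid x) \leq \sum_y e^\gamma Q(y) = e^\gamma .
\end{equation*}

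The main obstacle is showing that the redistribution is feasible, that is, that the capped mass fits back under the cap inside $\mathcal{Y}(x)$. This is where $\gamma\geq -\log r$ is used. The total room available under the cap within the support is $e^\gamma Q(\mathcal{Y}(x)) \geq Q(\mathcal{Y}(x))/r$, and the hypothesis is designed to make this at least $1$, which is the total mass $\tilde W_x$ must carry. Since the total capacity $\sum_{y\in\mathcal{Y}(x)} e^\gamma Q(y)\geq 1 = \sum_y W(y\mid x)$, the deficit $\sum_y (e^\gamma Q(y)-W(y\mid x))^+$ on $\mathcal{Y}(x)$ is at least the excess $\eta$, so a water-filling redistribution exists.

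For this I need $Q(\mathcal{Y}(x)) = r = Q(\mathcal{Y}(x_0))$ for every $x$. I expect this to follow from Gallager symmetry: within each block of the output partition, the columns are permutations of each other, so $Q$ is constant on the block. The rows are also permutations of each other, so each $x$ has the same number of support points in each block, and the claim follows. If a fully uniform treatment is awkward, an alternative is to perform the construction for $x_0$ and transport it to the other inputs via the symmetry permutations of each block; these preserve $Q$, and so preserve both the cap and the TV distance.
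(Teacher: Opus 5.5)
Your proof is correct. It follows the paper's mass-shaving skeleton: cap $W_x$ relative to $Q$, bound the TV distance by the tail of $i(x_0,Y)$, bound the objective via $\sum_y Q(y)=1$, and use symmetry to get $Q(\mathcal{Y}(x))=r$ (this is exactly Lemma~\ref{lemma:c_independence_general}). Where you differ is the renormalisation step.

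The paper caps at $MQ$ and rescales by $1/c(M)$. The effective cap is then $e^{\gamma(M)}Q$ with $\gamma(M)=\log(M/c(M))$. The paper therefore needs $c(M)\le Mr$ plus a continuity argument in $M$ to show that $\gamma(M)$ ranges over exactly $[-\log r,\infty)$.

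You instead fix the final cap $e^\gamma Q$ and water-fill the shaved mass into the gaps below it. This removes the reparametrisation. It also makes the hypothesis $\gamma\ge-\log r$ visible as the room-counting condition $e^\gamma Q(\mathcal{Y}(x))\ge 1$.

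The paper's channel is in fact one member of your family. It also satisfies $\tilde W\le e^{\gamma(M)}Q$, and its TV distance is exactly $\sum_y\bigl(W(y\mid x_0)-e^{\gamma(M)}Q(y)\bigr)^+$, i.e.\ your $\eta$ at $\gamma=\gamma(M)$. It simply redistributes proportionally to $\min\{W(y\mid x),MQ(y)\}$ rather than into the gaps.

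The paper's version buys an explicit closed-form channel. Yours is shorter and shows where the constraint on $\gamma$ comes from. Definition~\ref{def:L_quantity} imposes no support restriction on $\tilde W$, and $Q>0$ everywhere. So if you also let the shaved mass go to outputs outside $\mathcal{Y}(x)$, your count only requires $e^\gamma\ge 1$, i.e.\ $\gamma\ge 0$, which slightly strengthens the one-shot bound.
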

\begin{proof}
    We consider the family of conditional p.m.f.s
    \begin{equation}
        \label{def:W_proposal_general}
        \tilde{W}(y\mid x) := \frac{1}{c(M, x)} \min\{W(y\mid x), M Q(y)\}, \quad  (x, y) \in \mathcal{X}\times \mathcal{Y},
    \end{equation}
    where $M > 0$ is a tunable parameter, $Q(y) = \frac{1}{|\mathcal{X}|}\sum_{x\in \mathcal{X}}W(y\mid x)$ is the capacity-achieving output p.m.f.\ for the 
    symmetric channel $W$, and $c(M, x)$
    is the normalising factor. It is bounded from above as follows

    \begin{equation}
        c(M, x) = \sum_{y \in \mathcal{Y}} \min \{ W(y\mid x), MQ(y)\} = \sum_{y \in \mathcal{Y}(x)} \min \{ W(y\mid x), MQ(y)\} \leq M Q\left(\mathcal{Y}(x)\right) =: Mr(x),
    \end{equation}
where $\mathcal{Y}(x)$ is support set of $W_x$. By Lemma~\ref{lemma:c_independence_general}, both  $c(M, x)$ and $r(x)$ are independent of the input symbol $x$; thus, we denote $c(M, x)$  simply as $c(M)$ and $r(x)$ as $r$
    \begin{equation}
    \label{eq:c(M)_general}
        c(M) := \sum_{y\in \mathcal{Y} }\min\{W(y\mid x), M Q(y)\} \leq \min \{Mr, 1\}.
    \end{equation}
    
    Due to the symmetry of the channel, the worst-case TV distance between $\tilde{W}$ and $W$ is independent of the input $x \in \mathcal{X}$. We can bound it as,
    \begin{align}
        \Delta(\tilde{W}, W) &= \max_{x\in \mathcal{X}} \|\tilde{W}_x - W_x \|_{\mathrm{TV}} = \|\tilde{W}_{x_0} - W_{x_0}\|_{\mathrm{TV}}, \quad \text{for any } x_0 \in \mathcal{X} \nonumber\\
        &= \sum_{y\in \mathcal{Y}} \left(W(y\mid x_0) - \tilde{W}(y\mid x_0)\right) \mathbb{I}\left\{ W(y\mid x_0) \geq \tilde{W}(y\mid x_0) \right\} \nonumber\\
        &= \sum_{y\in \mathcal{Y}} \left(W(y\mid x_0) - \tilde{W}(y\mid x_0)\right) \mathbb{I}\left\{ W(y\mid x_0) \geq \frac{1}{c(M)} \min\{W(y\mid x_0), M Q(y)\} \right\} \nonumber\\
        &\leq \sum_{y\in \mathcal{Y}} W(y\mid x_0) \mathbb{I}\left\{ W(y\mid x_0) \geq \frac{M}{c(M)} Q(y) \right\} \label{eq:one_shot_general_costraint_step1}\\
        &= \mathbb{P}_{W_{x_0}}\left( W(Y\mid x_0) \geq \frac{M}{c(M)} Q(Y) \right), \nonumber
    \end{align}
    where~\eqref{eq:one_shot_general_costraint_step1} follows by considering the strictly positive contributions to the TV sum. If $W(y\mid x_0)\leq MQ(y)$, then $\tilde{W}(y\mid x_0)=W(y\mid x_0)/c(M)\geq W(y\mid x_0)$, so the contribution is zero. Otherwise, $\tilde{W}(y\mid x_0)=MQ(y)/c(M)$, and a strictly positive contribution requires $W(y\mid x_0)>MQ(y)/c(M)$. Bounding each such contribution by \(W(y\mid x_0)\) gives~\eqref{eq:one_shot_general_costraint_step1}.
    
    Then, rearranging the terms inside the probability and defining $\gamma(M) := \log \frac{M}{c(M)}$, we obtain
    \begin{equation}
        \label{eq:ub_general_constraint}
        \Delta(\tilde{W}, W) \leq \mathbb{P}_{W_{x_0}}\left( i(x_0, Y) \geq \gamma(M) \right).
    \end{equation}

    Next, evaluating the objective function $L^\epsilon(W)$ for the proposed $\tilde{W}$, gives
    \begin{align}
        \sum_{y\in \mathcal{Y}} \max_{x\in \mathcal{X}} \tilde{W}(y\mid x) &= \frac{1}{c(M)}\sum_{y\in \mathcal{Y}} \max_{x\in \mathcal{X}}\min \{ W(y\mid x), M Q(y)\} \nonumber\\
        &\leq \frac{1}{c(M)}\sum_{y\in \mathcal{Y}} \max_{x\in \mathcal{X}} M Q(y) = \frac{M}{c(M)} = e^{\gamma(M)}, \label{eq:ub_general_obj}
    \end{align}
    where the equality in the second line holds because $\sum_{y\in \mathcal{Y}} Q(y)=1$ and $Q(y)$ is independent of $x$. Therefore, combining the choice of $\tilde{W}$ from~\eqref{def:W_proposal_general} with the bounds in \eqref{eq:ub_general_constraint} and~\eqref{eq:ub_general_obj}, we deduce that for any $M$ satisfying $\mathbb{P}_{W_{x_0}}( i(x_0, Y) \geq \gamma(M) ) \leq \epsilon$, we have $\Delta(\tilde{W}, W) \leq \epsilon$. 
    
    For $S=\mathcal{Y}(x_0)$, let
$$
a=\min_{y\in S}\frac{W(y\mid x_0)}{Q(y)}>0,
\qquad
b=\max_{y\in S}\frac{W(y\mid x_0)}{Q(y)}<\infty.
$$
Since $Q(y)>0$ for any $y \in \mathcal{Y}$ ($W$ does not contain any all-zero columns), $c(M)$ from \eqref{eq:c(M)_general} can be expressed as

$$c(M) = \sum_{y\in \mathcal{Y}}Q(y)\min\left\{M, \frac{W(y\mid x)}{Q(y)}\right\}.$$

Then, $c(M)=Mr$ for $0<M\leq a$, and $c(M)=1$ for $M\geq b$.
Since $c(M)$ is positive and continuous, $M/c(M)$ is continuous, it equals $1/r$ for sufficiently small $M$, and tends to infinity as $M \to \infty$. Together with $M/c(M)\geq 1/r$, this implies that
$$
\{\gamma(M):M>0\}=[-\log r,\infty).$$
Finally, optimising over all valid $\gamma(M) \geq -\log r$ yields
    \begin{equation*}
        L^\epsilon(W) \leq \inf_{\substack{\gamma(M) \geq - \log r : \\ \mathbb{P}_{W_{x_0}}\left(i(x_0, Y) \geq \gamma(M) \right) \leq \epsilon}} e^{\gamma(M)},
    \end{equation*}
    for any arbitrarily chosen $x_0 \in \mathcal{X}$.
\end{proof}

\begin{theorem}
    \label{thm:ub_general}
    Let $\epsilon \in (0,1)$ and $W\in \mathcal{P}(\mathcal{Y}\mid \mathcal{X})$ be a symmetric channel with $V(W) >0$. The $\epsilon$-smooth channel max-information of the memoryless channel $W^{\otimes n}$ is bounded  above as $n \to \infty$
    \begin{equation}
        \label{eq:general_ub}
        I_{\max}^\epsilon(W^{\otimes n}) \leq nC(W) + \sqrt{nV(W)}\Phi^{-1}\left(1-\epsilon +\frac{B(W)}{\sqrt{n}}\right),
    \end{equation}
    where $B(W)$ is a channel-dependent constant determined by the Berry--Esseen bound.
\end{theorem}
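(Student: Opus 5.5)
The plan is to chain Corollary~\ref{cor} and Lemma~\ref{lemma:one_shot_ub_general} for the product channel $W^{\otimes n}$, and then evaluate the resulting tail constraint with the Berry--Esseen theorem.

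First, we note that $W^{\otimes n}$ is again a symmetric channel. The output partition is the product of the partitions of $W$, and rows and columns remain permutations of one another within each product block. Its capacity-achieving output distribution is $Q^{\otimes n}$, and its information density is additive: $i(x^n,y^n)=\sum_{k=1}^n i(x_k,y_k)$. Corollary~\ref{cor} gives $I_{\max}^\epsilon(W^{\otimes n})\le \log L^\epsilon(W^{\otimes n})$. Lemma~\ref{lemma:one_shot_ub_general} then gives
\[
\log L^\epsilon(W^{\otimes n})\le \inf\Bigl\{\gamma\ge -\log r^n:\ \mathbb{P}_{W^{\otimes n}_{x_0^n}}\Bigl(\textstyle\sum_{k=1}^n i(x_{0,k},Y_k)\ge\gamma\Bigr)\le\epsilon\Bigr\}.
\]
Here we take $x_0^n=(x_0,\dots,x_0)$, and $r^n=Q^{\otimes n}(\mathcal{Y}(x_0)^n)$. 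It therefore suffices to exhibit a single admissible $\gamma_n$ equal to the right-hand side of \eqref{eq:general_ub}.

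Next, under $W_{x_0}^{\otimes n}$ the summands $i(x_0,Y_k)$ are i.i.d. They take finitely many finite values, since $Y_k\in\mathcal{Y}(x_0)$ almost surely. Their mean is $C(W)$ and their variance is $V(W)>0$; this uses Lemma~\ref{lemma:capacity_dispersion} and the symmetry remark, which say that the law of $i(x,Y)$ under $W_x$ does not depend on $x$, so the conditional variance coincides with $U(p,W)$ at the uniform $p$. Let $T:=\mathbb{E}|i(x_0,Y)-C(W)|^3<\infty$. The Berry--Esseen theorem then gives, for all $t$,
\[
\Bigl|\mathbb{P}\bigl(S_n\ge nC+t\sqrt{nV}\bigr)-(1-\Phi(t))\Bigr|\le \frac{B(W)}{\sqrt n},\qquad B(W):=\frac{C_{\mathrm{BE}}\,T}{V(W)^{3/2}}.
\]
Now choose $t_n=\Phi^{-1}(1-\epsilon+B/\sqrt n)$, which is well defined once $n$ is large enough that $1-\epsilon+B/\sqrt n<1$. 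This gives $\mathbb{P}(S_n\ge \gamma_n)\le 1-\Phi(t_n)+B/\sqrt n=\epsilon$, where $\gamma_n:=nC+\sqrt{nV}\,t_n$. Hence $\gamma_n$ satisfies the probability constraint.

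The remaining, mildly delicate step is to check the side constraint $\gamma_n\ge -\log r^n=-n\log r$. Since $r\le 1$, the right side is nonnegative, so some care is needed. We handle it with Jensen's inequality:
\[
C=\mathbb{E}_{W_{x_0}}\log\frac{W(Y|x_0)}{Q(Y)}=-\mathbb{E}_{W_{x_0}}\log\frac{Q(Y)}{W(Y|x_0)}\ \ge\ -\log\sum_{y\in\mathcal{Y}(x_0)}Q(y)=-\log r.
\]
Equality would force $W(y|x_0)/Q(y)$ to be constant on $\mathcal{Y}(x_0)$, that is, $i(x_0,Y)$ to be degenerate, which contradicts $V(W)>0$. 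So the inequality is strict, $C>-\log r$. Because $t_n=O(1)$ (it converges to $\Phi^{-1}(1-\epsilon)$), we get $\gamma_n=nC+O(\sqrt n)\ge -n\log r$ for all large $n$. Thus $\gamma_n$ is admissible, and taking logarithms in Lemma~\ref{lemma:one_shot_ub_general} yields \eqref{eq:general_ub} for all sufficiently large $n$. This is the sense of ``as $n\to\infty$'' in the statement.
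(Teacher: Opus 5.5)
Your proposal is correct and follows the paper's proof essentially step by step. You apply Lemma~\ref{lemma:one_shot_ub_general} to the symmetric product channel $W^{\otimes n}$ with $r_n=r^n$, choose $\gamma_n=nC(W)+\sqrt{nV(W)}\,\Phi^{-1}(1-\epsilon+B(W)/\sqrt n)$, verify the tail constraint by Berry--Esseen, check admissibility $\gamma_n\ge -n\log r$ for large $n$, and conclude via Corollary~\ref{cor}. Your Jensen argument that $C(W)>-\log r$ holds strictly, with strictness coming from $V(W)>0$, is a welcome addition: the paper's Lemma~\ref{lemma:capacity_ineq_for_general} proves only the non-strict inequality, yet the proof of the theorem uses the strict one.
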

\begin{proof}
    
    Since a product of symmetric channels results in a symmetric channel,  the one-shot result of Lemma~\ref{lemma:one_shot_ub_general} applied to $W^{\otimes n}$ gives
    \begin{equation}
        \label{eq:ub_L_n_general}
        L(W^{\otimes n}) \leq \inf_{\substack{\gamma_n \geq -\log r_n : \\ \mathbb{P}_{W_{x^n}}(i(x^n, Y^n)\geq \gamma_n) \leq \epsilon }} e^{\gamma_n}, \quad \text{for any arbitrary } x^n \in \mathcal{X}^n,
    \end{equation}
    where $r_n= Q^n(\mathcal{Y}^n(x^n)) = \prod_{i=1}^n Q(\mathcal{Y}(x_i))$. Following the result of Lemma \ref{lemma:c_independence_general}, $Q(\mathcal{Y}(x_i))$ is independent of $x_i$ implying that $r_n = r^n$, with $r: = Q(\mathcal{Y}(x_0))$ for some arbitrary $x_0\in \mathcal{X}$.
    
    Since the channel is memoryless, the information density tensorizes
    \begin{equation*}
        i(x^n, Y^n) = \log \frac{W^{\otimes n}(Y^n \mid x^n)}{Q^n(Y^n)} = \sum_{j=1}^n \log \frac{W(Y_j \mid x_j)}{Q(Y_j)} = \sum_{j=1}^n i(x_j, Y_j)
    \end{equation*}
    so that its mean is 
    \begin{align}
    \label{eq:mean_info_density_general}
        \mathbb{E}_{W_{x^n}}[i(x^n, Y^n)] = \sum_{j=1}^n \mathbb{E}_{W_{x_j}}[i(x_j, Y_j)]  = n \mathbb{E}_{W_{x_0}}[i(x_0, Y_0)], 
    \end{align}
    where the last step follows from the symmetry of the channel, ensuring that the expectation is independent of the input $x_j$ (evaluated at an arbitrary $x_0\in\mathcal{X}$). Similarly, the variance of the information density is
    \begin{equation}
        \label{eq:variance_info_density_general} 
        \mathrm{Var}_{W_{x^n}}[i(x^n, Y^n)] = \sum_{j=1}^n \mathrm{Var}_{W_{x_j}}[i(x_j, Y_j)] = n \mathrm{Var}_{W_{x_0}}[i(x_0, Y_0)].
    \end{equation}
    Using Lemma~\ref{lemma:capacity_dispersion} for~\eqref{eq:mean_info_density_general} and~\eqref{eq:variance_info_density_general}, we have
    \begin{equation}
        \label{eq:mean_variance_general} 
        \mathbb{E}_{W_{x^n}}[i(x^n, Y^n)] = nC(W) \quad \text{and} \quad \mathrm{Var}_{W_{x^n}}[i(x^n, Y^n)] = nV(W).
    \end{equation}
    Next, we choose the threshold $\gamma_n$ as
    \begin{equation}
        \label{eq:gamma_general}
        \gamma_n = nC(W) + \sqrt{nV(W)}\Phi^{-1}\left(1-\epsilon+ \frac{B(W)}{\sqrt{n}}\right),
    \end{equation}
    where $B(W)$ is the appropriate Berry--Esseen constant for the information density random variable.
    Note that for $n$ large enough, $\gamma_n$ is always well defined; i.e. $\gamma_n > 0$ and $1-\epsilon+\frac{B(W)}{\sqrt{n}} <1$.
    
     Following the results of Lemmas \ref{lemma:capacity_dispersion} and \ref{lemma:capacity_ineq_for_general}, for any $x_0 \in \mathcal{X}$, $C(W) = D(W_{x_0}\|Q) > -\log r $, implying that for any $n$ large enough such that
    $$\sqrt{\frac{V(W)}{n}}\Phi^{-1}\left(1-\epsilon+ \frac{B(W)}{\sqrt{n}}\right)\geq -\log r - C(W),$$
the choice for $\gamma_n$ is admissible for Lemma \ref{lemma:one_shot_ub_general}: $\gamma_n \geq -n\log r = -\log r_n$. Next, we can bound
    \begin{align}
        \mathbb{P}_{W_{x^n}}(i(x^n, Y^n)\geq \gamma_n) &= \mathbb{P}_{W_{x^n}}\left(\frac{\sum_{j=1}^n i(x_j, Y_j) - n C(W)}{\sqrt{nV(W)}} \geq \frac{\gamma_n - nC(W)}{\sqrt{nV(W)}} \right) \nonumber\\
        &= 1 - \mathbb{P}_{W_{x^n}}\left(\frac{\sum_{j=1}^n i(x_j, Y_j) - n C(W)}{\sqrt{nV(W)}} < \frac{\gamma_n - nC(W)}{\sqrt{nV(W)}} \right) \nonumber\\
        &\leq 1 - \Phi\left(\frac{\gamma_n - nC(W)}{\sqrt{nV(W)}} \right) + \frac{B(W)}{\sqrt{n}} \label{eq:berry_general_step2}\\
        &= 1 - \left(1-\epsilon +\frac{B(W)}{\sqrt{n}}\right) +\frac{B(W)}{\sqrt{n}} = \epsilon, \label{eq:berry_general_step3}
    \end{align}
    where~\eqref{eq:berry_general_step2} follows from the Berry--Esseen theorem \cite{petrov-book:95}, and~\eqref{eq:berry_general_step3} is obtained by substituting the value $\gamma_n$ from~\eqref{eq:gamma_general}. Because this choice of $\gamma_n$ satisfies asymptotically the constraint of the optimization problem in~\eqref{eq:ub_L_n_general} , we obtain $L(W^{\otimes n}) \leq e^{\gamma_n}$ for $n \to \infty$. 
    
    Taking the logarithm and applying the bound $I_{\max}^\epsilon(W^{\otimes n}) \leq \log L(W^{\otimes n})$ established in~\eqref{eq:ub_epsilon_max_info}, we obtain the desired result.
\end{proof}

\begin{theorem}
    \label{thm:lb_ub_general}
     Let $\epsilon \in (0,1)$ and $W\in \mathcal{P}(\mathcal{Y}\mid \mathcal{X})$ be a symmetric {\em nonsingular} channel. The minimal simulation codebook size with fixed TV distance tolerance $\epsilon$, $M_\epsilon^*(W^{\otimes n})$, satisfies, as $n\to\infty$,
    \begin{equation}
        \label{eq:lb_ub_general}
        \log M_\epsilon^*(W^{\otimes n}) = nC(W) + \sqrt{n V(W)} \Phi^{-1}(1-\epsilon) + O(\log\log n).
    \end{equation}
\end{theorem}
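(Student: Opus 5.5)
The argument sandwiches $\log M_\epsilon^*(W^{\otimes n})$ between two bounds that agree up to $O(\log\log n)$: an achievability bound from Theorem~\ref{thm:ub_general} combined with the one-shot upper bound in \eqref{eq:Cao_one_shot}, and a converse from the channel-coding lower bound in \eqref{eq:Cao_coding} combined with the known third-order expansion for nonsingular symmetric channels.

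\emph{Upper bound.} Apply the right-hand inequality of \eqref{eq:Cao_one_shot} to $W^{\otimes n}$ with $\delta = \delta_n := 1/\sqrt{n}$, valid for $n$ large so that $\delta_n < \epsilon$. This gives
$$\log M_\epsilon^*(W^{\otimes n}) \le I_{\max}^{\epsilon-\delta_n}(W^{\otimes n}) + \log\log \sqrt{n}.$$
The last term is $O(\log\log n)$. To bound $I_{\max}^{\epsilon-\delta_n}(W^{\otimes n})$, rerun the proof of Theorem~\ref{thm:ub_general} with the tolerance $\epsilon-\delta_n$ in place of $\epsilon$. Nothing in that proof requires the tolerance to be fixed: the threshold $\gamma_n$ with $\Phi^{-1}(1-\epsilon+\delta_n + B(W)/\sqrt{n})$ is admissible for large $n$, because the argument of $\Phi^{-1}$ stays in a compact subinterval of $(0,1)$, and the Berry--Esseen step is unchanged. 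Hence
$$I_{\max}^{\epsilon-\delta_n}(W^{\otimes n}) \le nC + \sqrt{nV}\,\Phi^{-1}\!\left(1-\epsilon+\tfrac{1+B(W)}{\sqrt n}\right).$$
Since $\Phi^{-1}$ is Lipschitz near $1-\epsilon$, this equals $nC+\sqrt{nV}\Phi^{-1}(1-\epsilon)+O(1)$. Here $V>0$ by nonsingularity, which is needed for Theorem~\ref{thm:ub_general}. Choosing $\delta_n$ of order $n^{-1/2}$ is exactly what balances the two error terms: the loss from shrinking the tolerance stays $O(1)$, while the $\log\log(1/\delta_n)$ penalty is only $O(\log\log n)$.

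\emph{Lower bound.} Take the left-hand inequality of \eqref{eq:Cao_coding} with the same $\delta_n = 1/\sqrt n$, which satisfies $\delta_n<\min\{\epsilon,1-\epsilon\}$ for large $n$:
$$\log M_\epsilon^*(W^{\otimes n}) \ge \log N^*_{1-\epsilon-\delta_n}(W^{\otimes n}) - \tfrac12\log n.$$
For nonsingular symmetric channels the channel-coding expansion \cite{altug_wagner_symmetric_channels,PPV:10} gives
$$\log N^*_{\eta}(W^{\otimes n}) \ge nC + \sqrt{nV}\Phi^{-1}(\eta) + \tfrac12\log n - O(1).$$
We need this bound to hold uniformly for $\eta$ in a compact subset of $(0,1)$, so that it applies at $\eta = 1-\epsilon-\delta_n$. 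With that uniformity, $\Phi^{-1}(1-\epsilon-\delta_n) = \Phi^{-1}(1-\epsilon) - O(n^{-1/2})$, and the $+\tfrac12\log n$ term exactly cancels the $\log\delta_n = -\tfrac12\log n$ loss. The result is
$$\log M_\epsilon^*(W^{\otimes n}) \ge nC + \sqrt{nV}\Phi^{-1}(1-\epsilon) - O(1).$$
Using $V_{1-\epsilon}=V$ for symmetric channels (Lemma~\ref{lemma:capacity_dispersion}), the two bounds together give \eqref{eq:lb_ub_general}.

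\emph{Main obstacle.} The delicate point is the converse. It relies on the $+\frac12\log n$ achievability term for channel coding, taken from the literature, holding uniformly over a shrinking-to-fixed range of error probabilities $\eta$. If the cited result only covers fixed $\eta$, the fallback is to re-derive it directly with the RCU or $\kappa\beta$ achievability bound for nonsingular symmetric channels. There, a strong large-deviations (lattice or non-lattice local limit) estimate gives the $\frac12\log n$ gain with constants that are continuous in $\eta$. Without this step the cruder choice $\delta$ fixed would cost $O(1)$ inside $\Phi^{-1}$ and break the second-order term.
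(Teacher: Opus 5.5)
Your proposal is correct and matches the paper's proof step for step. The upper bound uses Cao et al.'s one-shot bound with $\delta=1/\sqrt n$ and reruns the Berry--Esseen argument of Theorem~\ref{thm:ub_general} at tolerance $\epsilon-1/\sqrt n$, yielding $\Phi^{-1}(1-\epsilon+(B(W)+1)/\sqrt n)$; as in that proof, admissibility $\gamma_n\ge -n\log r$ comes from $\gamma_n/n\to C(W)>-\log r$. The lower bound uses the coding-to-simulation inequality with $\delta=1/\sqrt n$, so that the $+\frac12\log n$ achievability term cancels $\log\delta$. The uniformity in the error probability that you flag as the main obstacle is exactly what the paper isolates as Lemma~\ref{lem:new}. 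It is proved as you suggest: from the explicit $\rho_n=O(n^{-1/2})$ form of the known nonsingular achievability bound, together with the mean value theorem on a compact interval.
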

\begin{proof}
    The result will follow from matching upper and lower bounds for $\log M_\epsilon^*(W^{\otimes n})$. 
    
    We begin with the lower bound.
    Fix $\epsilon\in(0,1)$, and let
$$p=1-\epsilon,\qquad \delta_n=\frac1{\sqrt n},\qquad \alpha_n=p-\delta_n.$$
Choose
$$h=\frac{1}{2}\min\{\epsilon,1-\epsilon\}, \qquad E_\epsilon=[p-h,p+h]\subset(0,1),$$
and write
$$D_\epsilon=\sup_{v\in E_\epsilon}(\Phi^{-1})'(v)<\infty.$$
Then we clearly have $\alpha_n\in E_\epsilon$ eventually, and Lemma~\ref{lem:new} applies.
Define
$$G_\epsilon(n)=nC(W)+\sqrt{nV(W)}\,\Phi^{-1}(1-\epsilon).$$
The mean value theorem gives
$$\sqrt{nV(W)}\,\Phi^{-1}(\alpha_n) \geq \sqrt{nV(W)}\,\Phi^{-1}(p)-\sqrt{V(W)}D_\epsilon,$$
and combining this with the coding-to-simulation inequality and Lemma~\ref{lem:new} we obtain that
\begin{align} 
\log M_\epsilon^*(W^{\otimes n}) 
&\geq \log N_{\alpha_n}^*(W^{\otimes n})+\log\delta_n\nonumber\\
&\geq nC(W)+\sqrt{nV(W)}\,\Phi^{-1}(\alpha_n) +\left(b(W)-\frac12\right)\log n-K(E_\epsilon)
    \nonumber\\ 
&\geq G_\epsilon(n) +\left(b(W)-\frac{1}{2}\right)\log n -K(E_\epsilon)-\sqrt {V(W)} D_\epsilon. 
    \label{eq:toreuse}
\end{align}
Taking $b(W)=\frac{1}{2}$, we obtain,
\begin{equation}
\log M_\epsilon^*(W^{\otimes n})\geq G_\epsilon(n)-K_-,
\label{eq:lb_M_nonsingular}
\end{equation}
with $K_-=K(E_\epsilon)+\sqrt{V(W)}D_\epsilon$. 
        

    For the upper bound, 
    we apply~\cite[Theorem~2]{Cao_channel_simulation},
    \begin{equation*}
        \log M_\epsilon^*(W^{\otimes n}) \leq I_{\max}^{\epsilon-\delta}(W^{\otimes n}) + \log\log \frac{1}{\delta},
    \end{equation*}
    which holds for any $\delta \in (0, \epsilon)$. Set again $\delta = \frac{1}{\sqrt{n}}$ for sufficiently large $n$, and define
$$
\eta_n=\epsilon-\frac1{\sqrt n},\qquad
q_n=1-\epsilon+\frac{B(W)+1}{\sqrt n},
\qquad
\gamma_n^*=nC(W)+\sqrt{nV(W)}\Phi^{-1}(q_n).
$$
Then we have that, eventually, $\eta_n\in(0,1)$, $q_n\in(0,1)$, and
$\frac{\gamma_n^*}{n}\to C(W)>-\log r,$ where the last inequality follows from the results of Lemmas \ref{lemma:capacity_dispersion} and \ref{lemma:capacity_ineq_for_general}.
Hence $\gamma_n^*$ is admissible for Lemma~\ref{lemma:one_shot_ub_general}
 applied to $W^{\otimes n}$. The Berry–Esseen calculation at the end of the proof
 of Theorem~\ref{thm:ub_general}, with error tolerance $\eta_n$, gives
$$
\mathbb P_{W_{x^n}}(i(x^n,Y^n)\geq\gamma_n^*)\leq\eta_n.
$$
Consequently,
$$
I_{\max}^{\eta_n}(W^{\otimes n})\le\gamma_n^*
$$
from which we obtain
    \begin{align}
        \log M_\epsilon^*(W^{\otimes n}) &\leq n C(W) + \sqrt{n V(W)}\Phi^{-1}\left(1-\epsilon+\frac{B(W)+1}{\sqrt{n}} \right) + \log\log \sqrt{n} \nonumber\\
        &\leq n C(W) + \sqrt{n V(W)}\Phi^{-1}\left(1-\epsilon \right) + \log\log n, \label{eq:ub_M_nonsingular}
    \end{align}
    where the last inequality follows from a first-order Taylor expansion of $\Phi^{-1}$, and the fact that $\log\log \sqrt{n} = \log(\frac{1}{2}\log n) = \log\log n - \log 2$, with the constant absorbed into the $O(1)$ term. Combining~\eqref{eq:lb_M_nonsingular} and~\eqref{eq:ub_M_nonsingular} yields the desired result.
\end{proof}

\section{Singular channels}

For the class of singular channels, the upper bound derived in Lemma~\ref{lemma:one_shot_ub_general} is no longer tight. As we establish below, the third-order term in the lower bound for singular channels is exactly $-\frac{1}{2}\log n$, so that the $O(1)$ term obtained via Lemma~\ref{lemma:one_shot_ub_general} is too loose. To close this gap, we introduce a refined one-shot bound for singular channels.

The intuition behind the new proposed simulation channel follows the same principle as in the previous section: We redistribute probability mass from the most informative sequences to the less informative ones. Here, however, instead of truncating the probabilities of the most informative sequences, we set their probabilities exactly to $0$; see Figure~\ref{fig:mass_erasure}.
Consider, for example, the Binary Erasure Channel (BEC)---a canonical singular channel---with erasure probability $\delta < 1/2$. It is computationally easier to simulate output sequences containing more erasures (since they are compatible with more inputs, leading to faster sample acceptance) than sequences with fewer erasures (which are highly informative and tightly restrict the space of valid input sequences). 

\begin{figure}[h!]
    \centering
    \begin{tikzpicture}[x=1cm, y=1cm]

\draw[->] (0,0) -- (6.5,0) node[right] {$k$};
\draw[->] (0,0) -- (0,5.5) node[above] {p.m.f.};

\foreach \x/\label in {0.4/0, 1.2/1, 2.0/2, 2.8/3, 3.6/4, 4.4/5, 5.2/6} {
    \node[below] at (\x, 0) {\label};
}

\draw (0,0) rectangle (0.8, 4.5);
\draw (0.8,0) rectangle (1.6, 3.8);
\draw (1.6,0) rectangle (2.4, 3.0);
\draw (2.4,0) rectangle (3.2, 2.2);
\draw (3.2,0) rectangle (4.0, 1.6);
\draw (4.0,0) rectangle (4.8, 1.0);
\draw (4.8,0) rectangle (5.6, 0.4);

\path[pattern=north east lines] (0, 0) rectangle (0.8, 4.5);
\path[pattern=north east lines] (0.8, 0) rectangle (1.6, 3.8);
\path[pattern=north east lines] (1.6, 0) rectangle (2.4, 3.0);

\draw (0,0) rectangle (0.8, 4.5);
\draw (0.8,0) rectangle (1.6, 3.8);
\draw (1.6,0) rectangle (2.4, 3.0);

\draw[dotted, thick] (0, 2.4) node[left] {} -- (6.0, 2.4);
\node[anchor=west] at (1.8, 3.5) {$\approx \epsilon$};
\draw[-{Stealth[length=4.5mm, width=3.5mm]}, thick, bend left=65] (1.2, 4.8) to node[above, yshift=3mm] {} (5.2, 1.8);

\node at (7.5, 2.8) {``erase'' mass };
\draw[double, -{Stealth}, thick] (6.8, 2.4) -- (8.2, 2.4);
\node at (7.5, 2.0) {normalise};

\begin{scope}[xshift=9.5cm]
\draw[->] (0,0) -- (6.5,0) node[right] {$k$};
\draw[->] (0,0) -- (0,5.5) node[above] {p.m.f.};

\foreach \x/\label in {0.4/0, 1.2/1, 2.0/2, 2.8/3, 3.6/4, 4.4/5, 5.2/6} {
    \node[below] at (\x, 0) {\label};
}

\draw[dotted, thick] (0, 2.4) -- (6.0, 2.4);

\draw[line width=2pt] (0, 0) -- (2.4, 0); 

\draw (2.4,0) rectangle (3.2, 3.5);
\draw (3.2,0) rectangle (4.0, 2.7);
\draw (4.0,0) rectangle (4.8, 1.9);
\draw (4.8,0) rectangle (5.6, 1.1);
\end{scope}

\end{tikzpicture}
    \caption{Schematic for the ``mass-erasure'' procedure.}
    \label{fig:mass_erasure}
\end{figure}

\begin{lemma}
\label{lemma:one_shot_ub_singular}
    Let $W \in \mathcal{P}(\mathcal{Y}\mid \mathcal{X})$ be a {\em singular} symmetric channel. For any $\epsilon\in (0,1)$ and any arbitrarly chosen $x_0 \in \mathcal{X}$, the quantity $L^\epsilon(W)$ defined in Definition~\ref{def:L_quantity} is bounded  above as 
    \begin{equation}
    \label{eq:ub_one_shot_singular}
        L^\epsilon(W) \leq \inf_{\substack{\gamma \in\mathbb R: \\ \mathbb{P}_{W_{x_0}}\left(i(x_0, Y) \geq \gamma\right) \leq \epsilon}} \frac{1}{1-\epsilon} \mathbb{E}_{W_{x_0}}\left[e^{i(x_0, Y)}\mathbb{I}\left\{i(x_0, Y) < \gamma\right\}\right], \quad \text{with } Y\sim W_{x_0}.
    \end{equation}
\end{lemma}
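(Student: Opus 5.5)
The plan is to exhibit an explicit channel $\tilde W$ that implements the ``mass-erasure'' procedure of Figure~\ref{fig:mass_erasure}, check that it is $\epsilon$-close to $W$ in $\Delta$, and evaluate the objective of Definition~\ref{def:L_quantity} at it. By the Remark on singular channels, $i(x,y)=\log\frac{|\mathcal{X}|}{|\mathcal{X}(y)|}$ whenever $W(y\mid x)>0$, so it depends on $y$ only; write $\iota(y):=\log\frac{|\mathcal{X}|}{|\mathcal{X}(y)|}$. Fix $\gamma$ with $\mathbb{P}_{W_{x_0}}(i(x_0,Y)\ge\gamma)\le\epsilon$ and define
\begin{equation*}
\tilde W(y\mid x):=\frac{1}{c(x)}\,W(y\mid x)\,\mathbb{I}\{\iota(y)<\gamma\},\qquad c(x):=\mathbb{P}_{W_x}(i(x,Y)<\gamma).
\end{equation*}
By the Remark that the law of $i(x,Y)$ under $W_x$ does not depend on $x$, we get $c(x)=c=1-\mathbb{P}_{W_{x_0}}(i(x_0,Y)\ge\gamma)\ge 1-\epsilon>0$. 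Hence $\tilde W$ is a well-defined channel.

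For the constraint, fix $x$. The outputs $y$ with $W(y\mid x)>\tilde W(y\mid x)$ are exactly the erased ones, i.e.\ those with $\iota(y)\ge\gamma$. On the remaining outputs $\tilde W=W/c\ge W$. So $\|\tilde W_x-W_x\|_{\mathrm{TV}}=\mathbb{P}_{W_x}(i(x,Y)\ge\gamma)=1-c\le\epsilon$ for every $x$, and therefore $\Delta(\tilde W,W)\le\epsilon$.

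For the objective, singularity gives $\max_x W(y\mid x)=w(y)$ for every $y$; this uses the standing assumption that there are no all-zero columns. Thus
\begin{equation*}
\sum_y\max_x\tilde W(y\mid x)=\frac1c\sum_{y}w(y)\,\mathbb{I}\{\iota(y)<\gamma\}.
\end{equation*}
The key step, and the only mildly delicate one, is to rewrite this sum as an expectation under a single $W_{x_0}$. I would use the double-counting identity $\sum_y f(y)=\frac{1}{|\mathcal{X}|}\sum_{x}\sum_{y\in\mathcal{Y}(x)}f(y)\frac{|\mathcal{X}|}{|\mathcal{X}(y)|}$, which holds because each $y$ lies in exactly $|\mathcal{X}(y)|$ of the sets $\mathcal{Y}(x)$. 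With $f(y)=w(y)\mathbb{I}\{\iota(y)<\gamma\}$, the inner sum equals $\mathbb{E}_{W_x}\bigl[e^{i(x,Y)}\mathbb{I}\{i(x,Y)<\gamma\}\bigr]$, since $w(y)=W(y\mid x)$ on $\mathcal{Y}(x)$ and $e^{i(x,y)}=e^{\iota(y)}$ there. This expectation is a functional of the law of $i(x,Y)$ under $W_x$, which by symmetry is the same for all $x$. Hence the average over $x$ collapses to the value at $x_0$.

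Combining these and using $1/c\le 1/(1-\epsilon)$ bounds $L^\epsilon(W)$ by $\frac{1}{1-\epsilon}\mathbb{E}_{W_{x_0}}\bigl[e^{i(x_0,Y)}\mathbb{I}\{i(x_0,Y)<\gamma\}\bigr]$. Taking the infimum over admissible $\gamma$ gives \eqref{eq:ub_one_shot_singular}.
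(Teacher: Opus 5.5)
Your proposal is correct and is essentially the paper's proof: you use the same erasure channel $\tilde W$, the same computation showing that the TV distance equals $\mathbb{P}_{W_{x_0}}(i(x_0,Y)\ge\gamma)$ with $c\ge 1-\epsilon$, and the same rewriting of $\sum_y w(y)\,\mathbb{I}\{\iota(y)<\gamma\}$ as an average over $x$ that collapses to $x_0$. The only differences are cosmetic. You invoke the $x$-invariance of the law of $i(x,Y)$ and a direct double-counting identity, whereas the paper uses its appendix lemma on $c_2,c_3$ and the relation $w(y)=Q(y)|\mathcal{X}|/|\mathcal{X}(y)|$, which encode the same computation.
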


\begin{proof}
    Let $\gamma \in \mathbb{R}$ be chosen such that

    \begin{equation}
    \label{eq:choice_gamma_singular}
        \mathbb P_{W_{x_0}}(i(x_0,Y)\ge\gamma) \leq \epsilon, \quad \text{with } Y \sim W_{x_0},
    \end{equation}
    for some arbitrary input symbol $x_0\in \mathcal{X}$.
Now consider the family of conditional distributions,
    \begin{equation}
        \label{def:sampled_distr_singular}
        \tilde{W}(y\mid x) := \frac{1}{c(x)} W(y\mid x)\mathbb{I}\{i(x, y) < \gamma\},
    \end{equation}
    where $c(x) := \sum_{y\in \mathcal{Y}}W(y\mid x) \mathbb{I}\{i(x, y) < \gamma\} \leq 1$ is the normalisation constant. Recall that, for a singular channel, the information density is $i(x,y) = \log \frac{|\mathcal{X}|}{|\mathcal{X}(y)|}$ whenever $W(y\mid x) > 0$~\eqref{eq:singular_info_density}. Consequently, $c(x)$ takes the form
    \begin{equation*}
        c(x) = \sum_{y\in \mathcal{Y}}W(y\mid x) \mathbb{I}\left\{\log \frac{|\mathcal{X}|}{|\mathcal{X}(y)|} < \gamma\right\}.
    \end{equation*}
    By~\eqref{eq:c2_constant} proved in Lemma~\ref{lemma:c_independence_general}, this sum is independent of the input $x$, so we simply denote it by $c$. We note that by the choice of $\gamma$ in \eqref{eq:choice_gamma_singular}, $\tilde{W}$ is always well defined because
    \begin{equation}
    \label{eq:lb_c_singular}
    c=\mathbb P_{W_{x_0}}(i(x_0,Y)<\gamma)
  =1-\mathbb P_{W_{x_0}}(i(x_0,Y)\ge\gamma)
  \ge1-\epsilon>0.
  \end{equation}

    Due to the symmetry of the channel, the worst-case TV distance between $\tilde{W}$ and $W$ is independent of the input $x \in \mathcal{X}$, and we can bound it above as follows
    \begin{align}
        \Delta(\tilde{W}, W) &= \max_{x\in \mathcal{X}}\|\tilde{W}_x - W_x \|_{\mathrm{TV}} = \|\tilde{W}_{x_0} - W_{x_0}\|_{\mathrm{TV}} \nonumber\\
        &= \sum_{y\in \mathcal{Y}} \left(W(y\mid x_0) - \tilde{W}(y\mid x_0)\right) \mathbb{I}\left\{W(y\mid x_0) \geq \tilde{W}(y\mid x_0)\right\} \nonumber\\
        &= \sum_{y\in \mathcal{Y}} \left(W(y\mid x_0) - \tilde{W}(y\mid x_0)\right) \mathbb{I}\left\{W(y\mid x_0) \geq \frac{1}{c} W(y\mid x_0)\mathbb{I}\{i(x_0, y) < \gamma\}\right\} \nonumber\\
        &= \sum_{y\in \mathcal{Y}} W(y\mid x_0) \mathbb{I}\left\{i(x_0, y) \geq \gamma\right\} \label{eq:one_shot_singular_costraint_step1}\\
        &= \mathbb{P}_{W_{x_0}}\left(i(x_0, Y) \geq \gamma\right), \label{eq:one_shot_singular_costraint}
    \end{align}
    where~\eqref{eq:one_shot_singular_costraint_step1} holds 
     because outputs satisfying $i(x_0,y)<\gamma$ have $\tilde{W}(y\mid x_0)=W(y\mid x_0)/c\geq W(y\mid x_0)$, and hence contribute zero to the TV sum. Outputs satisfying $i(x_0,y)\geq\gamma$ have $\tilde{W}(y\mid x_0)=0$, and hence contribute $W(y\mid x_0)$. Consequently, the TV distance equals the total probability of the latter region.
    

    Next, we evaluate the objective function for $L^\epsilon(W)$. For a singular channel, if $W(y\mid x) > 0$, it equals a constant $w(y)$ for all valid inputs, and the information density $i(x,y)$ depends only on $y$. Therefore,
    \begin{align}
        \sum_{y\in \mathcal{Y}} \max_{x\in \mathcal{X}} \tilde{W}(y\mid x) &= \sum_{y\in \mathcal{Y}} \max_{x\in \mathcal{X}} \frac{1}{c} W(y\mid x)\mathbb{I}\{i(x, y) < \gamma\} \nonumber\\
        &= \frac{1}{c} \sum_{y\in \mathcal{Y}} w(y)\mathbb{I}\left\{\log \frac{|\mathcal{X}|}{|\mathcal{X}(y)|} < \gamma\right\} \label{eq:obj_singular_step1}\\
        &= \frac{1}{c} \sum_{y\in \mathcal{Y}} Q(y)\frac{|\mathcal{X}|}{|\mathcal{X}(y)|}\mathbb{I}\left\{\log \frac{|\mathcal{X}|}{|\mathcal{X}(y)|} < \gamma\right\}, \label{eq:obj_singular_step1_5}
    \end{align}
    where~\eqref{eq:obj_singular_step1_5} follows from the singular channel property $w(y) = Q(y) \frac{|\mathcal{X}|}{|\mathcal{X}(y)|}$. Expanding the capacity-achieving output distribution $Q(y) = \frac{1}{|\mathcal{X}|}\sum_{x\in \mathcal{X}}W(y\mid x)$, we obtain
    \begin{align}
        \sum_{y\in \mathcal{Y}} \max_{x\in \mathcal{X}} \tilde{W}(y\mid x) &= \frac{1}{c} \sum_{x\in \mathcal{X}} \sum_{y\in \mathcal{Y}} \frac{1}{|\mathcal{X}|}W(y\mid x) \frac{|\mathcal{X}|}{|\mathcal{X}(y)|}\mathbb{I}\left\{\log \frac{|\mathcal{X}|}{|\mathcal{X}(y)|} < \gamma\right\} \label{eq:obj_singular_step2} \\
        &= \frac{1}{c} \sum_{x\in \mathcal{X}} \frac{1}{|\mathcal{X}|} \left(\sum_{y\in \mathcal{Y}} W(y\mid x) e^{i(x, y)}\mathbb{I}\left\{i(x, y) < \gamma\right\} \right) \nonumber\\
        &= \frac{1}{c} \sum_{y\in \mathcal{Y}} W(y\mid x_0) e^{i(x_0, y)}\mathbb{I}\left\{i(x_0, y) < \gamma\right\} \sum_{x\in \mathcal{X}} \frac{1}{|\mathcal{X}|} \label{eq:obj_singular_step3}\\
        &= \frac{1}{c}\mathbb{E}_{W_{x_0}} \left[e^{i(x_0, Y)}\mathbb{I}\left\{ i(x_0, Y) < \gamma\right\} \right], \label{eq:obj_singular_step4}
    \end{align}
    where~\eqref{eq:obj_singular_step3} follows from~\eqref{eq:c3_constant} in  Lemma~\ref{lemma:c_independence_general} in the Appendix, allowing us to replace the inner sum with its value at an arbitrary $x_0 \in \mathcal{X}$. Substituting the lower bound for $c$ from \eqref{eq:lb_c_singular}   into~\eqref{eq:obj_singular_step4} yields
    \begin{equation*}
        \sum_{y\in \mathcal{Y}} \max_{x\in \mathcal{X}} \tilde{W}(y\mid x) \leq \frac{1}{1-\epsilon}\mathbb{E}_{W_{x_0}} \left[e^{i(x_0, Y)}\mathbb{I}\left\{ i(x_0, Y) < \gamma\right\} \right].
    \end{equation*}
    Taking the infimum over all valid thresholds $\gamma$  from \eqref{eq:choice_gamma_singular} yields the desired upper bound on $L^\epsilon(W)$.
\end{proof}

\begin{theorem}
\label{thm:ub_singular}
         Let $\epsilon \in (0,1)$ and $W\in \mathcal{P}(\mathcal{Y}\mid \mathcal{X})$ be a symmetric {\em singular} channel with $V(W)>0$. 
         Let $B(W)$ be the Berry--Esseen constant used in Theorem~\ref{thm:ub_general}, 
        and define
        $$
        T(W):=\mathbb E_{W_{x_0}}
        |i(x_0,Y)-C(W)|^3,
        \qquad
        A(W):=
        \frac1{\sqrt{2\pi V(W)}}
        +\frac{2T(W)}{V(W)^{3/2}},
       $$
       for any $x_0\in{\cal X}$.
         Then, the $\epsilon$-smooth channel max-information of the memoryless channel $W^{\otimes n}$ satisfies, for all $n>(B(W)/\epsilon)^2$,
    \begin{equation}
        \label{eq:singular_ub}
        I_{\max}^\epsilon(W^{\otimes n}) \leq nC(W) + \sqrt{nV(W)}\Phi^{-1}\left(1-\epsilon +\frac{B(W)}{\sqrt{n}}\right) - \frac{1}{2}\log n + \log\Big(\frac{A(W)}{1-\epsilon}\Big). 
    \end{equation}
\end{theorem}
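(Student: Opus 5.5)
We apply Lemma~\ref{lemma:one_shot_ub_singular} to the product channel $W^{\otimes n}$. It is again symmetric and singular, since $W^{\otimes n}(y^n\mid x^n)=\prod_j w(y_j)$ whenever it is positive. Its information density is $i(x^n,Y^n)=\sum_{j=1}^n i(x_j,Y_j)$, a sum of i.i.d.\ lattice-free-or-not terms. By Lemma~\ref{lemma:capacity_dispersion} this sum has mean $nC(W)$ and variance $nV(W)$. We take the same threshold as in Theorem~\ref{thm:ub_general},
\begin{equation*}
\gamma_n = nC(W)+\sqrt{nV(W)}\,\Phi^{-1}\Big(1-\epsilon+\frac{B(W)}{\sqrt n}\Big).
\end{equation*}
It is well defined because $n>(B(W)/\epsilon)^2$ forces $1-\epsilon+B(W)/\sqrt n<1$. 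The Berry--Esseen computation of \eqref{eq:berry_general_step2}--\eqref{eq:berry_general_step3} then shows $\mathbb P(i(x^n,Y^n)\ge\gamma_n)\le\epsilon$, so $\gamma_n$ is admissible. No constraint of the form $\gamma\ge-\log r$ is needed here. The lemma together with Corollary~\ref{cor} gives
\begin{equation*}
I^\epsilon_{\max}(W^{\otimes n})\le \log\frac{1}{1-\epsilon}+\log \mathbb E\big[e^{S_n}\mathbb I\{S_n<\gamma_n\}\big],\qquad S_n:=i(x^n,Y^n).
\end{equation*}

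It remains to show $\mathbb E[e^{S_n}\mathbb I\{S_n<\gamma_n\}]\le A(W)\,e^{\gamma_n}/\sqrt n$. We write the expectation as $e^{\gamma_n}\mathbb E[e^{-(\gamma_n-S_n)}\mathbb I\{S_n<\gamma_n\}]$ and slice the half-line $(-\infty,\gamma_n)$ into unit intervals $[\gamma_n-k-1,\gamma_n-k)$, $k\ge0$. This gives
\begin{equation*}
\mathbb E\big[e^{S_n-\gamma_n}\mathbb I\{S_n<\gamma_n\}\big]\le\sum_{k\ge0}e^{-k}\,\mathbb P\big(\gamma_n-k-1\le S_n<\gamma_n-k\big).
\end{equation*}
Each interval probability is bounded by the Berry--Esseen theorem with explicit constant: $\mathbb P(a\le S_n<a+1)$ is at most the Gaussian mass of an interval of length $1/\sqrt{nV}$, which is at most $1/\sqrt{2\pi nV}$, plus twice the Berry--Esseen error $C_{BE}T/(V^{3/2}\sqrt n)$.

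Summing the geometric series $\sum_k e^{-k}=e/(e-1)$ only multiplies by an absolute constant. The main obstacle is obtaining exactly the stated constant $A(W)=\frac{1}{\sqrt{2\pi V}}+\frac{2T}{V^{3/2}}$. To get it we avoid crude unit slicing. Instead we use the identity
\begin{equation*}
\mathbb E\big[e^{S_n-\gamma_n}\mathbb I\{S_n<\gamma_n\}\big]=\int_0^\infty e^{-t}\,\mathbb P(\gamma_n-t\le S_n<\gamma_n)\,dt .
\end{equation*}
We then bound $\mathbb P(\gamma_n-t\le S_n<\gamma_n)\le \frac{t}{\sqrt{2\pi nV}}+\frac{2T}{V^{3/2}\sqrt n}$, using Berry--Esseen with constant at most $1$, so that the two CDF errors contribute $2\cdot T/(V^{3/2}\sqrt n)$. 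Integrating against $e^{-t}$ uses $\int_0^\infty te^{-t}\,dt=\int_0^\infty e^{-t}\,dt=1$ and yields exactly $A(W)/\sqrt n$.

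Taking logarithms then gives $\gamma_n-\tfrac12\log n+\log\frac{A(W)}{1-\epsilon}$, which is \eqref{eq:singular_ub}. Some care is needed at this step. We must check that the Berry--Esseen constant used for the interval probabilities is compatible with $B(W)$, for instance by taking $B(W)\ge T/V^{3/2}$. We must also confirm that the bound is uniform in $\gamma_n$, which holds since the interval estimate does not depend on the location of the interval.
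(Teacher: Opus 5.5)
Your proof is correct and follows the paper's argument: the same one-shot lemma applied to $W^{\otimes n}$, the same threshold $\gamma_n$, and the same Berry--Esseen admissibility check. The only difference is that the paper imports the bound $\mathbb{E}[e^{S_n}\mathbb{I}\{S_n<\gamma_n\}]\le A(W)e^{\gamma_n}/\sqrt n$ from \cite[Lemma~2]{altug_wagner_symmetric_channels}, whereas you derive it directly. You use the identity $\mathbb{E}[e^{S_n-\gamma_n}\mathbb{I}\{S_n<\gamma_n\}]=\int_0^\infty e^{-t}\,\mathbb{P}(\gamma_n-t\le S_n<\gamma_n)\,dt$ together with a two-sided Berry--Esseen interval estimate, which does yield the constant $A(W)$. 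Your closing worry about compatibility with $B(W)$ is unnecessary: the interval estimate needs only that the absolute Berry--Esseen constant is at most $1$, regardless of how $B(W)$ was chosen for the tail constraint.
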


\begin{proof}
    Lemma~\ref{lemma:one_shot_ub_singular} applied to  $W^{\otimes n}$ states that
    \begin{equation}
        \label{eq:ub_L_n_singular1}
        L(W^{\otimes n}) \leq \inf_{\substack{\gamma_n \in \mathbb{R} : \\ \mathbb{P}_{W_{x^n}}(i(x^n, Y^n) \geq \gamma_n)\leq \epsilon}} \frac{1}{1-\epsilon} \mathbb{E}_{W_{x^n}}\left[e^{i(x^n, Y^n)}\mathbb{I}\left\{i(x^n, Y^n) < \gamma_n\right\}\right],
    \end{equation}
    for any $x^n \in \mathcal{X}^n$. 
    As in the part of the proof of Theorem~\ref{thm:ub_general} leading 
    to~\eqref{eq:mean_variance_general}, since the information density $i(x^n, Y^n)$ tensorizes, its mean and variance are given by
    \begin{equation}
    \label{eq:mean_variance_singular}
        \mathbb{E}_{W_{x^n}}[i(x^n, Y^n)] = nC(W) \quad \text{and} \quad \mathrm{Var}_{W_{x^n}}[i(x^n, Y^n)] = nV(W).
    \end{equation}

    We choose the threshold $\gamma_n$ as in~\eqref{eq:gamma_general},
    \begin{equation}
        \label{eq:gamma_singular}
        \gamma_n = nC(W) + \sqrt{nV(W)}\Phi^{-1}\left(1-\epsilon+ \frac{B(W)}{\sqrt{n}}\right),
    \end{equation}
    where $B(W)$ is the Berry--Esseen constant. 
    For $n>(B(W)/\epsilon)^2$,  we have that $1-\epsilon+B(W)/\sqrt n$ belongs in $(0,1)$, so $\gamma_n$ is finite.
        
    With this choice for $\gamma_n$, the tail probability constraint is  satisfied
    \begin{align}
        \mathbb{P}_{W_{x^n}}(i(x^n, Y^n) \geq \gamma_n) &= \mathbb{P}_{W_{x^n}}\left(\frac{\sum_{j=1}^n i(x_j, Y_j) - n C(W)}{\sqrt{nV(W)}} \geq \frac{\gamma_n - nC(W)}{\sqrt{nV(W)}} \right) \nonumber\\
        &= 1 - \mathbb{P}_{W_{x^n}}\left(\frac{\sum_{j=1}^n i(x_j, Y_j) - n C(W)}{\sqrt{nV(W)}} < \frac{\gamma_n - nC(W)}{\sqrt{nV(W)}} \right) \nonumber\\
        &\leq 1 - \Phi\left(\frac{\gamma_n - nC(W)}{\sqrt{nV(W)}} \right) + \frac{B(W)}{\sqrt{n}} \label{eq:berry_singular_step2}\\
        &= 1 - \left(1-\epsilon +\frac{B(W)}{\sqrt{n}}\right) +\frac{B(W)}{\sqrt{n}} = \epsilon, \label{eq:berry_singular_step3}
    \end{align}
    where~\eqref{eq:berry_singular_step2} follows from the Berry--Esseen theorem, and~\eqref{eq:berry_singular_step3} is obtained by substituting our choice of $\gamma_n$ from~\eqref{eq:gamma_singular}. After rearranging and using the tensorization property, the objective in~\eqref{eq:ub_L_n_singular1} can be bounded from above using~\cite[Lemma~2]{altug_wagner_symmetric_channels} as,
    \begin{align}
        \frac{1}{1-\epsilon} \mathbb{E}_{W_{x^n}}\left[e^{i(x^n, Y^n)}\mathbb{I}\left\{i(x^n, Y^n) < \gamma_n\right\}\right] &= \frac{1}{1-\epsilon} \mathbb{E}_{W_{x^n}}\left[e^{\sum_{j=1}^n i(x_j, Y_j)}\mathbb{I}\left\{\sum_{j=1}^n i(x_j, Y_j) < \gamma_n\right\}\right] \nonumber\\
        &\leq \frac{1}{1-\epsilon}\cdot \frac{e^{\gamma_n}}{\sqrt{n}}\left(\frac{1}{\sqrt{2\pi V(W)}} + \frac{2T(W)}{V(W)^{3/2}}\right).
        \label{eq:tocombine}
    \end{align}
    Then, combining~\eqref{eq:ub_L_n_singular1} and~\eqref{eq:tocombine} gives,
$$
L(W^{\otimes n})
\leq
\frac{A(W)}{1-\epsilon}\,
\frac{e^{\gamma_n}}{\sqrt n},
$$
and taking logarithms and applying Corollary~\ref{cor} we get
$$
I_{\max}^{\epsilon}(W^{\otimes n})
\le
\gamma_n-\frac{1}{2}\log n
+\log\left(\frac{A(W)}{1-\epsilon}\right).
$$
Substituting the value of $\gamma_n$  proves the stated bound.

\end{proof}

\begin{theorem}
    Let $\epsilon \in (0,1)$ and $W\in \mathcal{P}(\mathcal{Y}\mid \mathcal{X})$ be a symmetric {\em singular} channel with $V(W) >0$. The minimal simulation codebook size $M_\epsilon^*(W^{\otimes n})$ with fixed TV distance tolerance $\epsilon$ satisfies, as $n\to\infty$,
    \begin{equation}
        \label{eq:lb_ub_singular}
        \log M_\epsilon^*(W^{\otimes n}) = nC(W) + \sqrt{n V(W)} \Phi^{-1}(1-\epsilon) - \frac{1}{2}\log n + O(\log\log n).
    \end{equation}
\end{theorem}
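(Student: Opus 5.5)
The plan is to mirror the proof of Theorem~\ref{thm:lb_ub_general}: a lower bound from the coding-to-simulation inequality~\eqref{eq:Cao_coding} combined with Lemma~\ref{lem:new}, and an upper bound from the one-shot bound~\eqref{eq:Cao_one_shot} combined with Theorem~\ref{thm:ub_singular}.

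For the lower bound, I will reuse inequality~\eqref{eq:toreuse} essentially verbatim. Take $p=1-\epsilon$, $\delta_n=1/\sqrt n$ and $\alpha_n=p-\delta_n$, with the same window $E_\epsilon$ and the same constant $D_\epsilon$ controlling $(\Phi^{-1})'$. The difference from the nonsingular case is that Lemma~\ref{lem:new} is now applied with the exponent appropriate to singular channels, namely $b(W)=0$. This matches the known $O(1)$ third-order term of $\log N^*_\epsilon$ for singular symmetric channels. The result is
\begin{equation*}
\log M_\epsilon^*(W^{\otimes n}) \geq nC(W)+\sqrt{nV(W)}\,\Phi^{-1}(1-\epsilon) - \tfrac12\log n - K(E_\epsilon)-\sqrt{V(W)}D_\epsilon .
\end{equation*}
Here the $-\frac12\log n$ comes entirely from the $\log\delta_n$ penalty in~\eqref{eq:Cao_coding}. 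I must check that Lemma~\ref{lem:new} indeed covers singular channels with $b(W)=0$ and is uniform over $\alpha\in E_\epsilon$; this is what the generic form of~\eqref{eq:toreuse} suggests.

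For the upper bound, I will apply~\cite[Theorem~2]{Cao_channel_simulation} with $\delta=1/\sqrt n$, giving
\begin{equation*}
\log M^*_\epsilon(W^{\otimes n})\le I^{\eta_n}_{\max}(W^{\otimes n})+\log\log\sqrt n, \qquad \eta_n=\epsilon-1/\sqrt n .
\end{equation*}
I will then invoke Theorem~\ref{thm:ub_singular} with tolerance $\eta_n$ in place of $\epsilon$. The main obstacle is that Theorem~\ref{thm:ub_singular} is stated for a fixed $\epsilon$, whereas here the tolerance varies with $n$. Inspecting its proof, the only places the tolerance enters are the following:
\begin{itemize}
\item the choice of $\gamma_n$ with argument $1-\eta_n+B(W)/\sqrt n=1-\epsilon+(B(W)+1)/\sqrt n$;
\item the Berry--Esseen step, which is unaffected;
\item the factor $1/(1-\eta_n)\le 2/(1-\epsilon)$ for large $n$;
\item the condition $n>(B(W)/\eta_n)^2$, which holds eventually.
\end{itemize}
The Altu\u{g}--Wagner bound~\cite[Lemma~2]{altug_wagner_symmetric_channels} does not depend on the tolerance at all. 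So the argument goes through uniformly and gives
\begin{equation*}
I^{\eta_n}_{\max}(W^{\otimes n})\le nC(W)+\sqrt{nV(W)}\Phi^{-1}\!\left(1-\epsilon+\tfrac{B(W)+1}{\sqrt n}\right)-\tfrac12\log n+\log\tfrac{2A(W)}{1-\epsilon}.
\end{equation*}

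To finish, I will use the mean value theorem on $\Phi^{-1}$ over $E_\epsilon$ once more, so that
\begin{equation*}
\sqrt{nV(W)}\,\Phi^{-1}\!\left(1-\epsilon+\tfrac{B(W)+1}{\sqrt n}\right)\le\sqrt{nV(W)}\,\Phi^{-1}(1-\epsilon)+\sqrt{V(W)}(B(W)+1)D_\epsilon .
\end{equation*}
Adding $\log\log\sqrt n=\log\log n-\log 2$ then yields the upper bound
\begin{equation*}
nC(W)+\sqrt{nV(W)}\Phi^{-1}(1-\epsilon)-\tfrac12\log n+\log\log n+O(1).
\end{equation*}
This matches the lower bound up to $O(\log\log n)$ and establishes~\eqref{eq:lb_ub_singular}.
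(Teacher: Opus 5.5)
Your proposal is correct and follows essentially the same route as the paper: the lower bound reuses \eqref{eq:toreuse} with $b(W)=0$. The upper bound combines \cite[Theorem~2]{Cao_channel_simulation} at $\delta=1/\sqrt n$ with the argument of Theorem~\ref{thm:ub_singular} run at the $n$-dependent tolerance $\eta_n=\epsilon-1/\sqrt n$, followed by a first-order expansion of $\Phi^{-1}$. The only cosmetic difference is that the paper uses $1/(1-\eta_n)\le 1/(1-\epsilon)$ directly (since $\eta_n<\epsilon$) rather than your looser factor $2/(1-\epsilon)$, which changes nothing.
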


\begin{proof}
    We begin with the lower bound. Arguing exactly as in the proof of Theorem~\ref{thm:lb_ub_general} up to the point leading to~\eqref{eq:toreuse}, and taking $b(W)=0$ this time,  we obtain,
    in the notation of that proof,
    \begin{equation}
  \log M_\epsilon^*(W^{\otimes n})
  \geq G_\epsilon(n)-\frac{1}{2}\log n-K_-.
  \label{eq:lb_M_singular}
    \end{equation}

    For the upper bound, we apply~\cite[Theorem~2]{Cao_channel_simulation},
    \begin{equation*}
        \log M_\epsilon^*(W^{\otimes n}) \leq I_{\max}^{\epsilon-\delta}(W^{\otimes n}) + \log\log \frac{1}{\delta},
    \end{equation*}
    which holds for any $\delta \in (0, \epsilon)$.

With $\eta_n$ and $\gamma_n^*$ as in the proof of Theorem~\ref{thm:lb_ub_general}, we have
\begin{align*}
I_{\max}^{\eta_n}(W^{\otimes n})
&\leq \gamma_n^*-\frac{1}{2}\log n
    +\log\frac{A(W)}{1-\eta_n}\\
&=\gamma_n^*-\frac12\log n
    +\log\frac{A(W)}{1-\epsilon+n^{-1/2}}\\
&\leq\gamma_n^*-\frac12\log n
    +\log\frac{A(W)}{1-\epsilon}.
\end{align*}    
    
    So now setting $\delta = \frac{1}{\sqrt{n}}$ for $n$ sufficiently large yields
    \begin{align}
        \log M_\epsilon^*(W^{\otimes n}) &\leq n C(W) + \sqrt{n V(W)}\Phi^{-1}\left(1-\epsilon+\frac{B(W)+1}{\sqrt{n}} \right) - \frac{1}{2}\log n + \log\log \sqrt{n}  \nonumber\\
        &\leq n C(W) + \sqrt{n V(W)}\Phi^{-1}\left(1-\epsilon \right) - \frac{1}{2}\log n + \log\log n +O(1), \label{eq:ub_M_singular}
    \end{align}
    where the last inequality again follows by a first-order Taylor expansion of $\Phi^{-1}$, and by absorbing the constant from $\log\log \sqrt{n}$ into the $O(1)$ term. Combining~\eqref{eq:lb_M_singular} and~\eqref{eq:ub_M_singular} completes the proof.
\end{proof}

\section{Conclusions}
In this work we provide new upper bounds for the $\epsilon$-smooth channel max-information for Gallager-symmetric channels. Combined with the results of~\cite{Cao_channel_simulation}, these lead to finer expansions for the minimal simulation codebook size with fixed TV distance tolerance $\epsilon$. In particular, in the normal approximation regime, we
precisely identify the third-order term in these expansions. This reveals that, as in channel coding, there is a significant rate gap between singular and nonsingular channels. 

Extending these results to non-symmetric channels is an open and interesting avenue worth exploring in the future. 

\section*{Statement on the use of AI}
Gemini Pro was used for generating Figures \ref{fig:shave_mass_procedure} and \ref{fig:mass_erasure}. Moreover, it was used for arguing the claims of Lemma \ref{lemma:c_independence_general}.  All the proofs appearing in this
paper were written by the authors.
\bibliographystyle{plainnat}
\bibliography{thesis}
\newpage

\appendix
\section{Appendix: Additional lemmas}
\begin{lemma}
\label{lemma:c_independence_general}
    Let $W\in \mathcal{P}(\mathcal{Y}\mid \mathcal{X})$ be a symmetric channel and $\gamma>0$.
    Write
    $Q(y) = \frac{1}{|\mathcal{X}|}\sum_{x\in \mathcal{X}}W(y\mid x)$, $y\in\mathcal{Y}$. 
    The quantities
    \begin{align}
        c_1(x) &= \sum_{y \in \mathcal{Y}}\min\{W(y\mid x), \gamma Q(y)\}, \label{eq:c1_constant}\\
        c_2(x) &= \sum_{y\in \mathcal{Y}}W(y\mid x) \mathbb{I}\left\{\frac{|\mathcal{X}|}{|\mathcal{X}(y)|}\leq \gamma\right\}, \label{eq:c2_constant}\\
        c_3(x) &= \sum_{y\in \mathcal{Y}} W(y\mid x) \frac{|\mathcal{X}|}{|\mathcal{X}(y)|}\mathbb{I}\left\{ \frac{|\mathcal{X}|}{|\mathcal{X}(y)|}\leq \gamma\right\}, \label{eq:c3_constant}\\
        r(x) &= \sum_{y\in \mathcal{Y}(x)}Q(y), \label{eq:r_constant}
    \end{align}
    are all independent of $x$.
\end{lemma}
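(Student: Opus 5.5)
The approach is to reduce everything to one fact about symmetric channels. Within each block of the output partition, the rows restricted to that block are permutations of one another. Using that, we show that each of the four quantities is a sum over blocks of a function that depends on $x$ only through the \emph{multiset} of row entries in that block. Concretely, let $\mathcal{Y}=\bigcup_{k}\mathcal{Y}_k$ be the partition from the definition of symmetry. On each $\mathcal{Y}_k$, every row $W(\cdot\mid x)|_{\mathcal{Y}_k}$ is a permutation of every other row, and every column $W(y\mid\cdot)$, $y\in\mathcal{Y}_k$, is a permutation of every other column in that block.

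The first step is to show that the column-dependent quantities are constant on each block. Since all columns in $\mathcal{Y}_k$ are permutations of one another, their sums agree, so $Q(y)=\frac{1}{|\mathcal{X}|}\sum_x W(y\mid x)=:Q_k$ for all $y\in\mathcal{Y}_k$. Likewise their numbers of nonzero entries agree, so $|\mathcal{X}(y)|=:m_k$ is constant on $\mathcal{Y}_k$. Consequently $Q(y)$, the ratio $|\mathcal{X}|/|\mathcal{X}(y)|$, and the indicator $\mathbb{I}\{|\mathcal{X}|/|\mathcal{X}(y)|\le\gamma\}$ are all functions of the block index $k$ alone. Note that the values $|\mathcal{X}(y)|\ge 1$ by the no-all-zero-column assumption.

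The second step is to write each quantity blockwise as $\sum_k \sum_{y\in\mathcal{Y}_k} \phi_k(W(y\mid x))$ for a fixed function $\phi_k$:
\[
\phi_k^{(1)}(t)=\min\{t,\gamma Q_k\},\quad
\phi_k^{(2)}(t)=t\,\mathbb{I}\{|\mathcal{X}|/m_k\le\gamma\},\quad
\phi_k^{(3)}(t)=t\,\tfrac{|\mathcal{X}|}{m_k}\mathbb{I}\{|\mathcal{X}|/m_k\le\gamma\},\quad
\phi_k^{(4)}(t)=Q_k\,\mathbb{I}\{t>0\}.
\]
For $r(x)$ this uses $\mathcal{Y}(x)=\{y: W(y\mid x)>0\}$. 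Because the inner sum $\sum_{y\in\mathcal{Y}_k}\phi_k(W(y\mid x))$ is invariant under permuting the row entries within $\mathcal{Y}_k$, and the row for $x$ is such a permutation of the row for any $x'$, each block sum is independent of $x$. Summing over $k$ then gives the claim.

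The only point needing care is the first step: we must use column symmetry, not just row symmetry, to conclude that $Q$ and $|\mathcal{X}(y)|$ are constant on blocks. Row symmetry alone would let $Q$ vary within a block, which would break the invariance in $c_1$ and $r$. I expect no real obstacle beyond stating the partition precisely. Finally, the condition $\gamma>0$ plays no role and can be carried along unchanged, as can the strict versus non-strict inequality in the indicators, which appears in the versions used in Lemma~\ref{lemma:one_shot_ub_singular}.
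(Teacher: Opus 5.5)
Your proposal is correct and follows essentially the same route as the paper. In both, column symmetry within each block of the partition makes $Q$ and $|\mathcal{X}(y)|$ constant on blocks, and row symmetry then makes each block sum independent of $x$. Your unified $\phi_k$ formulation, including $\phi^{(4)}_k(t)=Q_k\,\mathbb{I}\{t>0\}$ for $r(x)$, is a compact packaging of the paper's case-by-case treatment of $c_1,c_2,c_3,r$.
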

\begin{proof}
    Since $W$ is symmetric, the output alphabet $\mathcal{Y}$ can be partitioned into disjoint subsets $\{\mathcal{Y}_1, \mathcal{Y}_2, \ldots, \mathcal{Y}_k\}$ such that, within each  $\mathcal{Y}_i$, the columns of the transition submatrix are permutations of one another. Therefore, 
    \begin{equation}
        Q(y_1) = \frac{1}{|\mathcal{X}|} \sum_{x\in \mathcal{X}}W(y_1\mid x) = \frac{1}{|\mathcal{X}|}\sum_{x\in \mathcal{X}}W(y_2\mid x) = Q(y_2), \quad \text{for any } y_1, y_2 \in \mathcal{Y}_i.
        \label{eq:constant_Q_proof}
    \end{equation}
    This implies that $Q$ is constant on each ${\cal Y}_i$; let $q_i$ denote this constant value. Then, for any $x_1, x_2 \in \mathcal{X}$,
    \begin{align*}
        c_1(x_1) &= \sum_{y\in \mathcal{Y}} \min\{W(y\mid x_1), \gamma Q(y)\} = \sum_{i=1}^k \sum_{y \in \mathcal{Y}_i} \min\{W(y\mid x_1), \gamma q_i\} \\
        &= \sum_{i=1}^k \sum_{y \in \mathcal{Y}_i} \min\{W(y\mid x_2), \gamma q_i\} = c_1(x_2),
    \end{align*}
    where the third equality follows because, within each subset $\mathcal{Y}_i$, the elements $\{W(y\mid x_1)\}_{y \in \mathcal{Y}_i}$ are a permutation of the elements $\{W(y\mid x_2)\}_{y \in \mathcal{Y}_i}$. 
    
    For $c_2(x)$ we have
    \begin{equation*}
        c_2(x) = \sum_{i=1}^k \sum_{y \in \mathcal{Y}_i} W(y\mid x) \mathbb{I}\left\{\frac{|\mathcal{X}|}{|\mathcal{X}(y)|}\leq \gamma\right\} = \sum_{i=1}^k I_i \sum_{y \in \mathcal{Y}_i} W(y\mid x),
    \end{equation*}
    where $I_i := \mathbb{I}\left\{ |\mathcal{X}|/|\mathcal{X}(y)| \leq \gamma \right\}$ is constant over $\mathcal{Y}_i$ because the columns of the submatrix are permutations (and therefore $|\mathcal{X}(y)|$ is constant for $y \in \mathcal{Y}_i$). Moreover, because the rows are permutation-invariant within each $\mathcal{Y}_i$, the sum $\sum_{y \in \mathcal{Y}_i} W(y\mid x)$ is also independent of $x$. This completes the proof for $c_2(x)$. The result for $c_3(x)$ follows along the same lines. 

    The independence of $r(x)$ can be argued as follows

    $$r(x) = \sum_{y \in \mathcal{Y}(x)} Q(y) = \sum_{i = 1}^k \sum_{y \in \mathcal{Y}(x)\cap \mathcal{Y}_i}Q(y) = \sum_{i = 1}^k |\mathcal{Y}(x)\cap \mathcal{Y}_i|q_i.$$

    For any $x_1\in \mathcal{X}$ and $x_2 \in \mathcal{X}$, the set of transition probabilities $\{W(y\mid x_1)\}_{y\in \mathcal{Y}_i}$ is a permutation of ${\{W(y\mid x_2)\}_{y\in \mathcal{Y}_i}}$ (row permutation). Therefore, the number of non-zero elements in $\mathcal{Y}(x)\cap \mathcal{Y}_i$ is independent of the input symbol, which combined with the expansion above of $r(x)$ gives the desired result.
\end{proof}

\begin{lemma}
\label{lemma:capacity_dispersion}
    Let $W \in \mathcal{P}(\mathcal{Y}\mid \mathcal{X})$ be a symmetric channel, $Q\in \mathcal{P}(\mathcal{Y})$ its capacity-achieving output p.m.f., and fix $\epsilon \in(0,1)$. The channel capacity, $C(W)$, and the channel dispersion, $V(W)$, can be expressed as:
    \begin{equation}
        C(W) = \mathbb{E}_{W_{x_0}}[i(x_0, Y)] \quad \text{and} \quad V_\epsilon(W) = \mathrm{Var}_{W_{x_0}}[i(x_0, Y)],
    \end{equation}
    for any arbitrarily chosen $x_0 \in \mathcal{X}$.
\end{lemma}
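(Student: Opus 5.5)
The plan is to show first that for a symmetric channel the uniform input $p_u$ is capacity-achieving, with output law $Q$. Then, since $D(W_x\|Q)$ and $\mathrm{Var}_{W_x}[i(x,Y)]$ do not depend on $x$, both $C(W)$ and $U(p,W)$ for every $p\in\Pi$ reduce to the stated single-input expressions.

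\emph{Step 1 (input-independence).} By the last remark of Section 2, the law of $i(x,Y)$ under $W_x$ is the same for all $x$. This can also be checked directly from the partition $\{\mathcal{Y}_i\}$ used in Lemma~\ref{lemma:c_independence_general}. $Q$ is constant on each $\mathcal{Y}_i$, and within $\mathcal{Y}_i$ the row $\{W(y\mid x)\}_{y\in\mathcal{Y}_i}$ is a permutation of the row for any other input. Hence the multiset $\{(W(y\mid x),Q(y))\}_{y}$ is independent of $x$. Consequently $D(W_x\|Q)=\mathbb{E}_{W_x}[i(x,Y)]=:\bar C$ and $\mathrm{Var}_{W_x}[i(x,Y)]=:\bar V$ are constants.

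\emph{Step 2 (capacity).} For any $p$ with output $Q_p=pW$, use the golden formula
\[
I(p;W)=\sum_x p(x)D(W_x\|Q)-D(Q_p\|Q)\le \bar C,
\]
with equality at $p=p_u$ because $p_uW=Q$. So $C(W)=\bar C=\mathbb{E}_{W_{x_0}}[i(x_0,Y)]$, and $Q$ is a capacity-achieving output distribution. The equality case also shows that every $p\in\Pi$ satisfies $D(Q_p\|Q)=0$, i.e.\ $Q_p=Q$. This is the standard uniqueness of the capacity-achieving output distribution, recovered here without strict convexity arguments.

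\emph{Step 3 (dispersion).} For $p\in\Pi$ we have $Q_p=Q$ and $I(p;W)=C(W)$. Therefore
\[
U(p,W)=\sum_x p(x)\,\mathbb{E}_{W_x}\big[(i(x,Y)-C(W))^2\big]=\sum_x p(x)\bar V=\bar V,
\]
using Step 1 and $\mathbb{E}_{W_x}[i(x,Y)]=C(W)$. Since $U(p,W)$ takes the same value at every $p\in\Pi$, the min and max over $\Pi$ coincide, and $V_\epsilon(W)=\mathrm{Var}_{W_{x_0}}[i(x_0,Y)]$ for every $\epsilon$.

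\emph{Main obstacle.} The delicate point is Step 3's need for $Q_p=Q$ on all of $\Pi$, not merely at $p_u$; the equality case of the golden formula handles this cleanly. A minor technicality is the convention $0\log 0=0$ when $W(y\mid x)=0$, so that expectations under $W_x$ ignore the $i=-\infty$ values. These values are harmless because $Q(y)>0$ for all $y$, given that $W$ has no all-zero columns.
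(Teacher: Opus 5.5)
Your proof is correct, and its skeleton matches the paper's. Both arguments establish that the law of $i(x,Y)$ under $W_x$ does not depend on $x$, then that every $p\in\Pi$ induces the output law $Q$, and then perform the same computation $U(p,W)=\sum_x p(x)\,\mathbb{E}_{W_x}[(i(x,Y)-C(W))^2]=\mathrm{Var}_{W_{x_0}}[i(x_0,Y)]$.

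Where you differ is in how you justify the middle step. The paper invokes the standard uniqueness of the capacity-achieving output distribution. It takes for granted, from the definition of the information density in Section 2, that this distribution is the uniform-input output law $Q$. It also asserts without argument that the mean of $i(x,Y)$ equals $C(W)$. You instead derive all of this from the identity $I(p;W)=\sum_x p(x)D(W_x\|Q)-D(Q_p\|Q)$, where $Q_p$ is the output law induced by $p$. Combined with your Step 1, this shows at once that the uniform input is optimal, that $C(W)=D(W_{x_0}\|Q)$, and that $\Pi=\{p: Q_p=Q\}$.

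The paper's version is shorter but relies on external facts: optimality of the uniform input for symmetric channels, and uniqueness of the optimal output law. Yours is self-contained. It also makes explicit the capacity identity $C(W)=\mathbb{E}_{W_{x_0}}[i(x_0,Y)]$, which the paper's proof leaves implicit even though it is half of the lemma's claim.
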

\begin{proof}
    Since the capacity-achieving output distribution is unique and the channel is symmetric, the distribution of $i(x, Y)$ under $W_x$ is independent of $x$. Consequently, 
     for every capacity-achieving input distribution $p\in\Pi$, the induced output distribution is $Q$. Moreover, symmetry implies that the conditional information-density distribution under $W_x$ is independent of $x$, with mean $C(W)$. Hence,
$$
U(p,W)
=\sum_xp(x)
  \mathbb E_{W_x}\!\left[(i(x,Y)-C(W))^2\right]
=\operatorname{Var}_{W_{x_0}}(i(x_0,Y)).
$$
Thus both the minimum and maximum of $U(p,W)$ over $p\in\Pi$ are equal. In particular, $V_\epsilon(W)$ is independent of $\epsilon\in(0,1)$.
\end{proof}

\begin{lemma}
    \label{lemma:capacity_ineq_for_general}
    Let $W \in \mathcal{P}(\mathcal{Y}\mid \mathcal{X})$ be a symmetric channel and $Q(y) = \frac{1}{|\mathcal{X}|}\sum_{x \in \mathcal{X}}W(y\mid x)$ is the capacity-achieving output distribution. Then, 

    $$D(W_{x}\|Q) \geq - \log Q(\mathcal{Y}(x)),$$
    for any arbitrary input symbol $x\in \mathcal{X}$.
\end{lemma}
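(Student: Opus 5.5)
We must show $D(W_x\|Q)\ge -\log Q(\mathcal{Y}(x))$. The plan is to use the variational nonnegativity of relative entropy after restricting $Q$ to the support of $W_x$. Write $S=\mathcal{Y}(x)$ and $r=Q(S)$. Since every column of $W$ is nonzero, $Q(y)>0$ for all $y$, so $r>0$. Define the conditional p.m.f.
\[
Q_S(y):=\frac{Q(y)}{r}\,\mathbb{I}\{y\in S\}.
\]
Because $W_x$ is supported on $S$, $W_x\ll Q_S$ and all logarithms below are finite.

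The key step is the decomposition
\[
D(W_x\|Q)=\sum_{y\in S}W(y\mid x)\log\frac{W(y\mid x)}{Q(y)}
=\sum_{y\in S}W(y\mid x)\log\frac{W(y\mid x)}{r\,Q_S(y)}
=D(W_x\|Q_S)-\log r .
\]
It uses only that $\sum_{y\in S}W(y\mid x)=1$. Gibbs' inequality gives $D(W_x\|Q_S)\ge 0$, hence $D(W_x\|Q)\ge-\log r=-\log Q(\mathcal{Y}(x))$. Equality holds if and only if $W_x=Q_S$.

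No genuine obstacle arises; the only care is with supports: terms with $W(y\mid x)=0$ vanish by the convention $0\log 0=0$, and $Q(y)>0$ on $S$ ensures finiteness. Symmetry is not even needed for the inequality itself. It enters the paper's use of the lemma only through Lemma~\ref{lemma:capacity_dispersion}, which identifies $D(W_{x}\|Q)$ with $C(W)$, and Lemma~\ref{lemma:c_independence_general}, which makes $r$ independent of $x$.

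The proofs of Theorems~\ref{thm:ub_general} and~\ref{thm:lb_ub_general} actually invoke the strict inequality $C(W)>-\log r$. To get it, I would argue that equality forces $W(y\mid x)=Q(y)/r$ on $S$. Then $i(x,Y)$ is a.s.\ equal to $-\log r$, so $\mathrm{Var}_{W_x}[i(x,Y)]=0$, i.e.\ $V(W)=0$ by Lemma~\ref{lemma:capacity_dispersion}. This contradicts the standing assumption $V(W)>0$. So under $V(W)>0$ the inequality is strict, which is exactly what those proofs need.
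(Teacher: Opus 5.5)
Your proof is correct and is essentially the paper's argument: the paper applies the log-sum inequality on $\mathcal{Y}(x)$, which amounts to your identity $D(W_x\|Q)=D(W_x\|Q_S)-\log Q(\mathcal{Y}(x))$ combined with Gibbs' inequality. Your extra observation is also sound and worth keeping. Equality would force $i(x,Y)\equiv-\log r$ and hence $V(W)=0$, so under $V(W)>0$ you get the strict inequality $C(W)>-\log r$, which Theorems~\ref{thm:ub_general} and~\ref{thm:lb_ub_general} invoke but the lemma as stated does not provide.
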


\begin{proof}

The result follows easily from the log-sum inequality:

\begin{align*}
    D(W_x\|Q) &= \sum_{y \in \mathcal{Y}}W_x(y)\log \frac{W_x(y)}{Q(y)} = \sum_{y \in \mathcal{Y}(x)}W_x(y)\log \frac{W_x(y)}{Q(y)}\\
    &\geq \left(\sum_{y \in \mathcal{Y}(x)}W_x(y)\right) \log\frac{\left(\sum_{y \in \mathcal{Y}(x)}W_x(y)\right)}{\left(\sum_{y \in \mathcal{Y}(x)}Q(y)\right)} = \log \frac{1}{\sum_{y\in \mathcal{Y}(x)}Q(y)} = - \log Q(\mathcal{Y}(x)).
\end{align*}
\end{proof}


\begin{lemma}
\label{lem:new}
Let \(W\) be a symmetric channel with dispersion $V(W)>0$, and write
$$ b(W)= \begin{cases} \frac{1}{2},    &W\;\mbox{is nonsingular},\\ 
                                          0,    &W\;\mbox{is singular}. \end{cases}
$$
For every closed interval $E\subset(0,1)$, there are finite constants $K(E)$ and $n(E)$, depending only on $W$ and $E$, such that
$$
\log N_\alpha^*(W^{\otimes n}) \geq nC(W)+\sqrt{nV(W)}\Phi^{-1}(\alpha) +b(W)\log n-K(E),
$$
for every $\alpha\in E$ and all $n\geq n(E)$.
\end{lemma}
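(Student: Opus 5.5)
This statement is a uniform (in $\alpha$) restatement of the known third-order achievability results for channel coding over symmetric channels. The plan is to invoke the achievability bounds of Altu\u{g}--Wagner~\cite{altug_wagner_symmetric_channels} and Polyanskiy--Poor--Verd\'u~\cite{PPV:10}, and to check that every constant in them can be chosen uniformly for $\alpha$ in a compact $E\subset(0,1)$. The case split is on the value of $b(W)$.

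\emph{Singular case, $b(W)=0$.} This case is the easier one. I would use Feinstein's bound, or the dependence-testing (DT) bound, with the uniform input distribution on $\mathcal{X}^n$, whose output distribution is $Q^n$. This gives, for every $\gamma$,
\[
\bar P_e \le \mathbb P\bigl(i(X^n,Y^n)\le \log N+\gamma\bigr)+e^{-\gamma}.
\]
By Lemma~\ref{lemma:capacity_dispersion} and symmetry, $i(x^n,Y^n)$ is a sum of i.i.d.\ terms with mean $nC(W)$ and variance $nV(W)$, whatever the input $x^n$. Berry--Esseen then gives $\mathbb P(\cdot)\le \Phi\bigl((\log N+\gamma-nC)/\sqrt{nV}\bigr)+B/\sqrt n$. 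I would choose $\gamma=\frac12\log n$ and
\[
\log N=nC+\sqrt{nV}\,\Phi^{-1}\bigl(\alpha-(B+1)/\sqrt n\bigr)-\tfrac12\log n .
\]
This choice is admissible because for $n\ge n(E)$ the argument of $\Phi^{-1}$ lies in a slightly enlarged compact subset of $(0,1)$. The mean value theorem, with $(\Phi^{-1})'$ bounded on that enlarged set, converts the argument back to $\Phi^{-1}(\alpha)$ at an $O(1)$ cost that is uniform in $\alpha$. The result is a bound with third-order term $-\frac12\log n$, which is weaker than the claimed $0\cdot\log n$. To recover $O(1)$ I would use the singular-channel refinement of the RCU bound instead, as in~\cite{altug_wagner_symmetric_channels}: the RCU term $\mathbb E[\min\{1,N e^{-i}\}]$ combined with~\cite[Lemma~2]{altug_wagner_symmetric_channels} gives $\mathbb E[e^{-(i-\log N)}\mathbb I\{i>\log N\}]=O(1/\sqrt n)$. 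This step works because $i$ is lattice-valued, taking values $\log(|\mathcal X|/|\mathcal X(y)|)$, and the lemma in~\cite{altug_wagner_symmetric_channels} handles lattice and nonlattice sums alike. The extra $O(1/\sqrt n)$ is absorbed by shifting $\alpha$ by $O(1/\sqrt n)$, so no $\log n$ penalty arises and $\log N=nC+\sqrt{nV}\Phi^{-1}(\alpha)-K$.

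\emph{Nonsingular case, $b(W)=\frac12$.} Here I would cite the $+\frac12\log n$ achievability of~\cite[Theorem~1]{altug_wagner_symmetric_channels} (see also~\cite{Wagner_RCU}), obtained from the RCU bound. Nonsingularity lets one condition on the output type and show that the conditional pairwise error probability decays like $e^{-(i-\log N)}/\sqrt n$ rather than $e^{-(i-\log N)}$. Substituting this into the RCU bound gives $\bar P_e\le \mathbb P(i\le \log N-\frac12\log n)+O(1/\sqrt n)$, and the same Berry--Esseen and mean-value steps then yield the stated expansion.

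\emph{Main obstacle.} The difficulty is not the asymptotics but \emph{uniformity}: the cited theorems are stated for fixed $\epsilon$. I would therefore go through their proofs and note that each constant depends on $\alpha$ in only two ways. The first is through $\sup_{E'}(\Phi^{-1})'$, which is finite on the compact enlargement $E'$ of $E$. The second is through the requirement $n\ge n(E)$, needed so that the shifted quantile stays in $E'$. The Berry--Esseen constant and the constant of~\cite[Lemma~2]{altug_wagner_symmetric_channels} depend only on $W$. Since these are the only places $\alpha$ enters, $K(E)$ and $n(E)$ can be taken uniform over $\alpha\in E$.
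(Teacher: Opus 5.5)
Your proposal is correct and takes essentially the paper's route. Both reduce to known achievability bounds of the form $nC(W)+\sqrt{nV(W)}\,\Phi^{-1}(\alpha-\rho_n)+b(W)\log n-C$, with $\rho_n=O(1/\sqrt n)$ and constants free of $\alpha$, and then obtain uniformity over $E$ from the mean value theorem on a compact enlargement $E'$.

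The differences are only in sourcing. For the nonsingular case you cite Altu\u{g}--Wagner's refined RCU bound where the paper cites Polyanskiy's Theorem~53. For the singular case your RCU-plus-Lemma~2 derivation is really the DT-type argument the paper cites; note that it works whenever $V(W)>0$ and is not specific to singular channels. One check you should add, as the paper does: nonsingularity gives $\mathbb{E}[\mathrm{Var}(i(X,Y)\mid Y)]>0$ under the uniform input, which is the hypothesis the cited $\frac12\log n$ result needs.
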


\begin{proof}
As we established in Lemma \ref{lemma:capacity_dispersion}, when the input distribution is uniform the information density has mean 
$C(W)$ and variance $V(W)$. Then the finite-blocklength coding-achievability proofs give bounds of the form
\begin{equation}
\log N_\alpha^*(W^{\otimes n}) \geq nC(W)+\sqrt{nV(W)}\,\Phi^{-1}(\alpha-\rho_n) +b(W)\log n-C,
\label{eq:oldbounds}
\end{equation}
where $C$ depends on $W$ but not on $\alpha$ and $n$, and where
$$0\leq \rho_n\leq \frac{D_W}{\sqrt n} ,$$
for all sufficiently large $n$, with the constant $D_W$ depending only on $W$.
Specifically, in the nonsingular case the lower bound on $N_\alpha^*$ follows from the proof of~\cite[Theorem~53]{polyanskiy:thesis}, 
which gives 
$$\rho_n=\frac{k_3+k_4}{\sqrt n}+e^{-k_1n},$$
and $C=\log k_2$, where the constants only depend on the input distribution and the channel. 
Under the uniform input distribution, nonsingularity implies that
$${\mathbb E}[\operatorname{Var}(i(X,Y)\mid Y)]>0,$$
so some output column contains two unequal positive transition probabilities, and both corresponding inputs have positive posterior probability.
In the singular case, the dependence-testing achievability proof in~\cite[Theorem 245 (248)]{PPV:10}
gives  the above the lower bound on $N_\alpha^*$ with 
$\rho_n=D_W/\sqrt n$, where $D_W$ only depends on $W$.

Now write $E=[\ell,u]$, for some $0<\ell\leq u<1$, and define
$$E'=\left[\frac{\ell}{2},\frac{1+u}{2}\right], \qquad D_E=\sup_{v\in E'}(\Phi^{-1})'(v) =\sup_{v\in E'}\frac{1}{\phi(\Phi^{-1}(v))}<\infty.$$
For large enough $n$, we have $\rho_n\leq\ell/2$, thus both $\alpha-\rho_n$ and $\alpha$ belong to $E'$, simultaneously for every $\alpha\in E$. 
Finally, the mean value theorem gives
\begin{align*} 
\sqrt{nV(W)} \bigl[\Phi^{-1}(\alpha)-\Phi^{-1}(\alpha-\rho_n)\bigr] 
&\leq \sqrt{nV(W)}D_E\rho_n\\ &\le \sqrt {V(W)}D_ED_W. \end{align*}
Substituting this into~\eqref{eq:oldbounds} completes the proof.
\end{proof}

\end{document}